\def\VersionFinal{}
\def\VersionLNCS{}
\def\VersionArXiv{}
	\undef\VersionWithComments{}
    \def\VersionLong{}
	\undef\VersionAnonymous{}
    \def\VersionFinal{}
	\newcommand{\LongVersion}[1]{#1}
	\newcommand{\ShortVersion}[1]{}
	\newcommand{\LongVersion}[1]{}
	\newcommand{\ShortVersion}[1]{#1}
	\newcommand{\LNCSVersion}[1]{#1}
	\newcommand{\IEEEVersion}[1]{}
	\newcommand{\LNCSVersion}[1]{}
	\newcommand{\IEEEVersion}[1]{#1}
 \newenvironment{ienumeration}
	{\ifdefined\VersionLong\begin{enumerate}\else\begin{inparaenum}[\itshape i\upshape)]\fi}
	{\ifdefined\VersionLong\end{enumerate}\else\end{inparaenum}\fi}
 \newenvironment{oneenumeration}
	{\ifdefined\VersionLong\begin{enumerate}\else\begin{inparaenum}[1)]\fi}
	{\ifdefined\VersionLong\end{enumerate}\else\end{inparaenum}\fi}
 \newenvironment{myitemize}
	{\ifdefined\VersionLong\begin{itemize}\else\begin{inparaitem}[]\fi}
	{\ifdefined\VersionLong\end{itemize}\else\end{inparaitem}\fi}
 \newenvironment{ienumeration}
	{\ifdefined\VersionLong\begin{enumerate}\else\begin{inparaenum}[\itshape i\upshape)]\fi}
	{\ifdefined\VersionLong\end{enumerate}\else\end{inparaenum}\fi}
\definecolor{darkblue}{rgb}{0.0,0.0,0.6}
\definecolor{darkgreen}{rgb}{0, 0.5, 0}
\definecolor{darkpurple}{rgb}{0.7, 0, 0.7}
\definecolor{darkblue}{rgb}{0, 0, 0.7}
\crefname{line}{\text{line}}{\text{lines}} %
\crefname{item}{\text{item}}{\text{items}} %
\crefname{example}{\text{Example}}{\text{Examples}} %
\crefname{assumption}{\text{Assumption}}{\text{Assumptions}} %
\crefname{algorithm}{\text{Algorithm}}{\text{Algorithms}}
\tikzstyle{every node}=[initial text=]
\tikzstyle{location}=[rectangle, rounded corners, minimum size=12pt, draw=black, fill=blue!10, inner sep=2pt]
\tikzstyle{invariant}=[draw=black, dotted, inner sep=1pt] %
\tikzstyle{final}=[double]
\tikzstyle{accepting}=[final]
\tikzstyle{PTPMOPT}=[,dashed,color=red,semithick]
\definecolor{coloract}{rgb}{0.0, 0.0, 0.0}
\definecolor{coloract}{rgb}{0.50, 0.70, 0.30}
\definecolor{colorclock}{rgb}{0.4, 0.4, 1}
\definecolor{colorconst}{rgb}{0.50, 0.20, 0.00}
\definecolor{colordisc}{rgb}{1, 0, 1}
\definecolor{colorloc}{rgb}{0.4, 0.4, 0.65}
\definecolor{colorparam}{rgb}{1, 0.6, 0.0}
\tikzstyle{rqanswer} = [
\newcommand{\rqanswer}[2]{\LongVersion{\smallskip}\noindent%
\smallskip
\begin{tikzpicture}%
\draw node[rqanswer]{\textbf{Answer to {#1}}:{ #2}};%
\end{tikzpicture}}
\pgfplotsset{compat=1.12}
\tikzset{
accepting/.style={double distance=1pt}
}
\newcommand{\N}{{\mathbb{N}}}
\newcommand{\Z}{{\mathbb{Z}}}
\newcommand{\Zp}{{\mathbb{N}_{>0}}}
\newcommand{\R}{{\mathbb{R}}}
\newcommand{\Rp}{{\mathbb{R}_{>0}}}
\newcommand{\ttrue}{\mathrm{t{\kern-1.5pt}t}}
\newcommand{\ffalse}{\mathrm{f{\kern-1.5pt}f}}
\newcommand{\powerset}[1]{\mathcal{P} ({#1})}
\newcommand{\imply}{\Rightarrow}
\newcommand{\figcaption}[1]{\def\@captype{figure}\caption{#1}}
\newcommand{\tblcaption}[1]{\def\@captype{table}\caption{#1}}
\tikzstyle{defproblem} = [
\newcommand{\defProblem}[3]
{%
\smallskip
\noindent\begin{tikzpicture}%
\draw node[defproblem]{%
\textbf{#1 problem:}\\%
\textsc{Input}: #2\\%
\textsc{Output}: #3};%
\end{tikzpicture}}%
\theoremstyle{plain}
\newtheorem{theorem}{Theorem}
\newtheorem{lemma}[theorem]{Lemma}
\theoremstyle{definition}
\newtheorem{definition}[theorem]{Definition}
\newtheorem{example}[theorem]{Example}
\theoremstyle{remark}
\newcommand{\TODO}[1]{\todo{#1}}
\newcommand{\gennote}[3]{\todo[linecolor=#2,backgroundcolor=#2!25,bordercolor=#2]{#3: #1}}
\newcommand{\MW}[1]{\gennote{#1}{orange}{MW}}
\newcommand{\instructions}[1]{{\gennote{\bfseries #1}{red}{Instructions}}}
\newcommand{\reviewer}[2]{{\gennote{``#2''}{purple}{Reviewer #1}}}
\newcommand{\ourTool}{\textsc{ArithHomFA}}
\newcommand{\homfa}{\textsc{HomFA}}
\newcommand{\Naive}{${\ourTool}_{\textsc{Naive}}$}
\newcommand{\Optimized}{${\ourTool}_{\textsc{Opt}}$}
\newcommand{\BloodGlucose}{\textsf{BGLvl}}
\newcommand{\RSS}{\textsf{RSS}}
\newcommand{\goodFeature}[1]{\color{darkgreen}#1}
\newcommand{\badFeature}[1]{\color{red}#1}
\newcommand{\real}{r} %
\newcommand{\var}{x}
\newcommand{\Var}{\mathcal{V}}
\newcommand{\signal}{\sigma}
\newcommand{\valuation}{\sigma}
\newcommand{\prefix}[2]{#1_{\leq {#2}}}
\newcommand{\signalInfiniteInside}{\valuation_1,\valuation_2,\dots}
\newcommand{\signalFiniteInside}{\valuation_1,\valuation_2,\dots,\valuation_n}
\newcommand{\signalWithFiniteInside}{\signal = \signalFiniteInside}
\newcommand{\signalWithInfiniteInside}{\signal = \signalInfiniteInside}
\newcommand{\fml}{\varphi}
\newcommand{\fun}{\mu}
\newcommand{\AP}{\mathbf{AP}}
\newcommand{\satisfy}[3]{(#1,#2) \models #3}
\newcommand{\UntilOp}[1]{\mathbin{\mathcal{U}_{#1}}}
\newcommand{\Release}[1]{\mathbin{\mathcal{R}_{#1}}}
\newcommand{\TRelease}[1]{\mathbin{\overline{\mathcal{R}}_{#1}}}
\newcommand{\DiaOp}[1]{\Diamond_{#1}}
\newcommand{\BoxOp}[1]{\square_{#1}}
\newcommand{\NextOp}{\mathcal{X}}
\newcommand{\constant}{d}
\newcommand{\Pred}{\mathrm{Pred}}
\newcommand{\Monitor}{\mathcal{M}}
\newcommand{\plain}{p}
\newcommand{\scale}{\mathit{scale}}
\newcommand{\secretKey}{\mathbf{sk}}
\newcommand{\key}{\secretKey}
\newcommand{\publicKey}{\mathbf{pk}}
\newcommand{\cipherPair}[1][]{(b#1,\mathbf{a}#1)}
\newcommand{\RLWE}[1]{\overline{\mathbf{#1}}}
\newcommand{\RLWEcipherPair}[1][]{(\RLWE{b}#1,\RLWE{a}#1)}
\newcommand{\RGSW}[1]{\tilde{\mathit{#1}}}
\newcommand{\noise}{e}
\newcommand{\cipher}{\mathbf{c}}
\newcommand{\enc}{\mathbf{enc}}
\newcommand{\dec}{\mathbf{dec}}
\newcommand{\encCKKS}{\enc_{\mathrm{CKKS}}}
\newcommand{\decCKKS}{\dec_{\mathrm{CKKS}}}
\newcommand{\encTFHE}{\enc_{\mathrm{TFHE}}}
\newcommand{\decTFHE}{\dec_{\mathrm{TFHE}}}
\newcommand{\quotient}{q}
\newcommand{\qRing}[1][\quotient]{\ensuremath{\Z / #1\Z}}
\newcommand{\quotientCKKS}{\quotient_{\mathrm{CKKS}}}
\newcommand{\quotientTFHE}{\quotient_{\mathrm{TFHE}}}
\newcommand{\word}[1][]{w#1}
\newcommand{\action}{a}
\newcommand{\wordInside}[1][]{\action#1_1 \action#1_2 \dots \action#1_{n#1}}
\newcommand{\wordWithInside}[1][]{\word[#1]=\wordInside[#1]}
\newcommand{\BitEncoding}{\mathrm{BitEnc}}
\newcommand{\Lg}{\mathcal{L}}
\newcommand{\emptyword}{\varepsilon}
\newcommand{\BLOCK}{\textsc{Block}}
\newcommand{\REVERSE}{\textsc{Reverse}}
\newcommand{\reverse}[1]{{#1}^{\mathrm{rev}}}
\newcommand{\signalLength}{n}
\newcommand{\encSignal}{\RLWE{\cipher}^{\signal}}
\newcommand{\encSignali}[1]{\encSignal_{#1}}
\newcommand{\encSignalSequence}{\encSignali{1},\encSignali{2},\cdots,\encSignali{|\Var| \times \signalLength}}
\newcommand{\encWord}{\RGSW{\cipher}^{\word}}
\newcommand{\encWordSequence}{\encWord_{1},\encWord_{2},\cdots,\encWord_{\signalLength}}
\newcommand{\encResult}{\cipher^{\mathrm{res}}}
\newcommand{\encResultSequence}{\encResult_{1},\encResult_{2},\cdots,\encResult_{\signalLength}}
\newcommand{\BlockSize}{B}
\newcommand{\BootInterval}{I_{\mathrm{boot}}}
\newcommand{\diffCKKS}{\RLWE{\cipher}_{\quotientCKKS}}
\newcommand{\diffScaledCKKS}{\RLWE{\cipher}^{\mathit{scaled}}_{\quotientCKKS}}
\newcommand{\diffSwitched}[1][]{#1{\cipher}^{\mathit{scaled}}_{\quotient_{\mathrm{tmp}}}}
\newcommand{\signTFHE}[1][\RGSW]{#1{\cipher}^{\mathit{sign}}_{\quotientTFHE}}
\newcommand{\diffMax}{\mathit{maxDiff}}
\newcommand{\Loc}{L}
\newcommand{\A}{\mathcal{A}}
\newcommand\egoIndex{\mathrm{ego}}
\newcommand\precIndex{\mathrm{prec}}
\newcommand\dEgoPreBr{d_{\egoIndex}^{\mathit{preBr}}}
\newcommand\dEgoBrake{d_{\egoIndex}^{\mathit{Br}}}
\newcommand\dPrecBrake{d_{\precIndex}^{\mathit{Br}}}
\newcommand\dRSS{d_{\mathit{RSS}}}
\newcommand\xOf[1]{x_{#1}}
\newcommand\yOf[1]{y_{#1}}
\newcommand\vyOf[1]{\dot{y}_{#1}}
\newcommand\ayOf[1]{\ddot{y}_{#1}}
\newcommand\ayMaxAcc{\ddot{y}_{\mathit{maxAcc}}}
\newcommand\ayMinBr{\ddot{y}_{\mathit{minBr}}}
\newcommand\ayMaxBr{\ddot{y}_{\mathit{maxBr}}}
\newcommand\Safe{S}
\newcommand\fmlPreBr{\fml_{\mathit{preBr}}}
\newcommand\fmlBrake{\fml_{\mathit{Br}}}
\newcommand\AEgoMaxAcc{A_{\egoIndex}^{\mathit{maxAcc}}}
\newcommand\AEgoMinBr{A_{\egoIndex}^{\mathit{minBr}}}
\newcommand\APrecMaxBr{A_{\precIndex}^{\mathit{maxBr}}}
\newcommand\reactionDelay{\rho}
\newcommand\reactionDelayInStep{\reactionDelay_{\mathrm{step}}}
\newcommand{\tbcolor}{\cellcolor{green!25}\bf}
\newcommand{\TIMEOUT}{{\color{red}\textbf{OOM}}}
 	\definecolor{colorok}{RGB}{80,80,150}
	\definecolor{colorok}{RGB}{0,0,0}
\newcommand{\eg}{\textcolor{colorok}{e.\,g.,}\xspace}
\newcommand{\ie}{\textcolor{colorok}{i.\,e.,}\xspace}
\newcommand{\st}{\textcolor{colorok}{s.t.}\xspace}
\def\orcidID#1{\smash{\href{https://orcid.org/#1}{\protect\raisebox{-1.25pt}{\protect\includegraphics{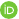}}}}}
\def\@doi#1{\href{https://doi.org/#1}
      {\ttfamily https://doi.org/#1}\egroup}}
\def\@doi#1{\ttfamily https://doi.org/#1\egroup}}
  \def\doi{\bgroup\catcode`\_=12\relax\@doi}}
\title{Oblivious Monitoring for Discrete-Time STL via Fully Homomorphic Encryption}
\author{
\ifdefined\VersionAnonymous%
Anonymous Author(s) %
\else
\LNCSVersion{%
Masaki Waga\inst{1}\orcidID{0000-0001-9360-7490}\and
Kotaro Matsuoka\inst{1}\orcidID{0000-0002-6642-1276}\and
Takashi Suwa\inst{1}\orcidID{0009-0004-9845-7419}\and
Naoki Matsumoto\inst{1}\orcidID{0000-0002-4497-3459}\and
Ryotaro Banno\inst{2}\orcidID{0000-0001-8610-9186}\and
Song Bian\inst{3}\orcidID{0000-0003-0467-6203}\and
Kohei Suenaga\inst{1}\orcidID{0000-0002-7466-8789}%
}%
\IEEEVersion{
 \IEEEauthorblockN{Masaki Waga}
 \IEEEauthorblockA{\textit{Graduate School of Informatics} \\
 \textit{Kyoto University}\\
 Kyoto, Japan \\
 mwaga@fos.kuis.kyoto-u.ac.jp,\orcidID{0000-0001-9360-7490}}
 \and
 \IEEEauthorblockN{Kotaro Matsuoka}
 \IEEEauthorblockA{\textit{Graduate School of Informatics} \\
 \textit{Kyoto University}\\
 Kyoto, Japan \\
 \orcidID{0000-0002-6642-1276}}
 \and
 \IEEEauthorblockN{Takashi Suwa}
 \IEEEauthorblockA{\textit{Graduate School of Informatics} \\
 \textit{Kyoto University}\\
 Kyoto, Japan \\
 email address or ORCID}
 \and
 \IEEEauthorblockN{Naoki Matsumoto}
 \IEEEauthorblockA{\textit{Graduate School of Informatics} \\
 \textit{Kyoto University}\\
 Kyoto, Japan \\
 \orcidID{0000-0002-4497-3459}}
 \and
 \IEEEauthorblockN{Ryotaro Banno}
 \IEEEauthorblockA{\textit{Graduate School of Informatics} \\
 \textit{Cybozu, Inc.}\\
 Osaka, Japan \\
 \orcidID{0000-0001-8610-9186}}
 \and
 \IEEEauthorblockN{Song Bian}
 \IEEEauthorblockA{\textit{Beihang University} \\
 \textit{Beihang University}\\
 Beijing, China \\
 \orcidID{0000-0003-0467-6203}}
 \and
 \IEEEauthorblockN{Kohei Suenaga}
 \IEEEauthorblockA{\textit{Graduate School of Informatics} \\
 \textit{Kyoto University}\\
 Kyoto, Japan \\
 \orcidID{0000-0002-7466-8789}}
}%
\fi
}
 \institute{%
 \ifdefined\VersionAnonymous%
 \else
 Graduate School of Informatics, Kyoto University, Kyoto, Japan%
 \and
 Cybozu, Inc.%
 \and
 Beihang University, Beijing, China%
 \fi
 }
\begin{document}

\maketitle
\begin{lncs} 
 \pagestyle{plain}
\end{lncs}

\begin{abstract}
 When monitoring a cyber-physical system (CPS) from a remote server, keeping the monitored data secret is crucial, particularly when they contain sensitive information, \eg{} biological or location data. Recently, Banno et al. (CAV'22) proposed a protocol for online LTL monitoring that keeps data concealed from the server using \emph{Fully Homomorphic Encryption (FHE)}. We build on this protocol to allow \emph{arithmetic} operations over encrypted values, \eg{} to compute a safety measurement combining distance, velocity, and so forth. Overall, our protocol enables oblivious online monitoring of \emph{discrete-time real-valued signals} against signal temporal logic (STL) formulas. Our protocol combines two FHE schemes, CKKS and TFHE, leveraging their respective strengths. We employ CKKS to evaluate arithmetic predicates in STL formulas while utilizing TFHE to process them using a DFA derived from the STL formula. We conducted case studies on monitoring blood glucose levels and vehicles' behavior against the Responsibility-Sensitive Safety (RSS) rules. Our results suggest the practical relevance of our protocol.
 \keywords{monitoring\and cyber-physical systems\and signal temporal logic\and fully homomorphic encryption\and CKKS\and TFHE}
\end{abstract}

\begin{ieee}
\begin{IEEEkeywords}
monitoring, cyber-physical systems, signal temporal logic, fully homomorphic encryption, CKKS, TFHE
\end{IEEEkeywords}
\end{ieee}

\LongVersion{\section{Introduction}}\ShortVersion{\section{Background}}\label{section:introduction}
\LongVersion{\subsection{Monitoring of CPS with confidential information}}

Given the safety-critical nature of cyber-physical systems (CPSs), monitoring their behavior is crucial, \eg{} to detect undesired behavior and prevent safety-critical situations beforehand.
For instance, if a monitor detects a hardware issue in a car, the car should come to a safe stop immediately.
Monitoring can also enhance the system's comfort.
For example, it can enhance smooth traffic flow by advising drivers on appropriate velocity, which helps reduce traffic congestion.

Such specifications are often confidential since they may include proprietary or sensitive information.
Although they can be kept unknown to the client by monitoring from a remote server, this exposes the monitored behavior to the server.
Such exposure may cause additional security concerns when the monitored behavior contains confidential information, \eg{} biological or location data.

\begin{example}[remote vehicle monitoring]%
 \label{example:self_driving_car_first}
 For smooth and safe driving, it is important to maintain proper velocity and distance from other vehicles.
 Since the desired velocity and distance are typically derived from the driver's visual observation or sensors on the car,
 a challenge arises when visibility is limited, \eg{} on curving roads\LongVersion{ (\cref{figure:RSS_example})}.
 Remote and centralized monitoring can enhance safety, \eg{}
 a remote server receives driving data from each vehicle, conducts monitoring, and sends a precaution to a car if its preceding car is slow.
 However, this approach poses a potential privacy issue because tracking vehicle positions may reveal drivers' personal information (\eg{} their home addresses); hence the monitored data should be kept unknown to the server.
 A more local monitoring approach may resolve this security issue but 
 can introduce another issue: a potential leak of the concrete definition of the desired condition, which may be proprietary. 

\begin{LongVersionBlock}
  \begin{figure}[tbp]
  \centering
  \includegraphics[width=0.90\linewidth]{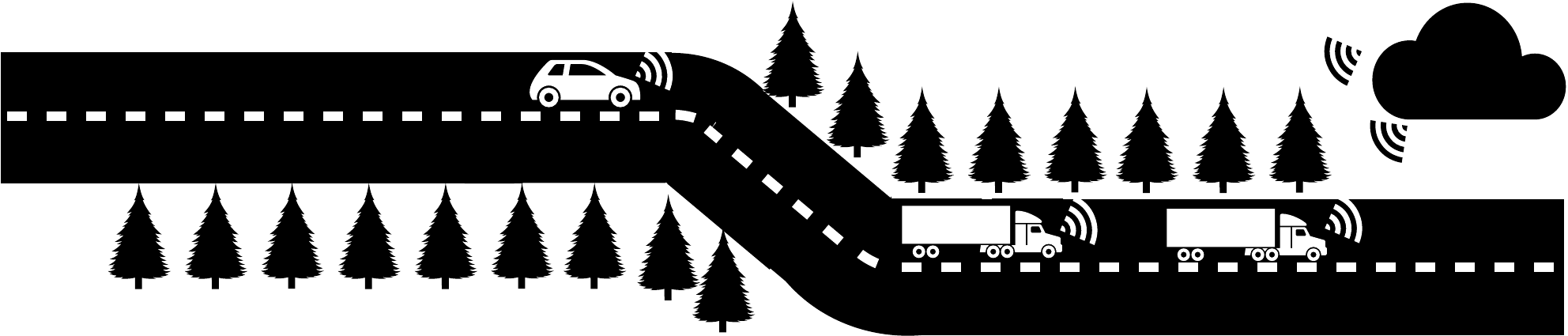}
  \caption{Our leading example: a driver-assistance system with remote monitoring.}%
  \label{figure:RSS_example}
\end{figure}
\end{LongVersionBlock}
\end{example}
\LongVersion{\subsection{Oblivious online LTL monitoring}}
To address this issue, Banno et al.~\cite{DBLP:conf/cav/BannoMMBWS22} introduced a protocol for LTL monitoring on a server without revealing monitored behavior.
Their protocol uses \emph{Fully Homomorphic Encryption (FHE)}~\cite{DBLP:conf/stoc/Gentry09}, which enables computations on ciphertexts.
The client sends a series of ciphertexts representing the monitored behavior to the server, and the server monitors it without decryption.
Also, the monitoring occurs \emph{online}, incrementally processing a stream of ciphertexts.

However, their protocol is limited to LTL formulas over \emph{Boolean} propositions\footnote{More precisely, their protocol supports regular languages over Booleans.},
\ie{} the monitored behavior is limited to a series of Boolean values.
Although their protocol can monitor various temporal behaviors,
it cannot monitor a series of \emph{numbers} against a specification containing \emph{arithmetic} operations.
Particularly, although it can compare numbers with constant thresholds by bit encoding of each value,
it is not straightforward to extend their algorithms to efficiently handle specifications including arithmetic operations over multiple ciphertexts.
Such limitation is due to their choice of the FHE scheme, specifically \emph{FHE over the Torus (TFHE)}~\cite{DBLP:journals/joc/ChillottiGGI20}.
While TFHE excels at Boolean operations, it is known to be slow for arithmetic operations, \eg{} multiplications.

\begin{example}%
 \label{example:self_driving_car_second}
 We continue with \cref{example:self_driving_car_first}.
 Both of the above security concerns are resolved by oblivious monitoring, \ie{}
 remotely monitoring driving data without revealing them to a server.
 However, the protocol in~\cite{DBLP:conf/cav/BannoMMBWS22} %
 cannot compare the vehicles' positions against a desired distance
 computed by arithmetic operations over multiple (encrypted) values, such as their velocities.
\end{example}
\LongVersion{\subsection{Contribution: Oblivious online monitoring of STL formulas with arithmetic operations}\label{section:introduction:oblivious_stl_monitoring}}%
\ShortVersion{\section{Oblivious online discrete-time STL monitoring}\label{section:protocol}}

We enhance the protocol in~\cite{DBLP:conf/cav/BannoMMBWS22} to accommodate specifications incorporating arithmetic operations.
Specifically, our protocol enables oblivious online monitoring of discrete-time real-valued signals against the safety fragment of \emph{signal temporal logic (STL)}~\cite{DBLP:conf/formats/MalerN04} with polynomial constraints as predicates.

\begin{figure}[tbp]
 \centering
 \scalebox{\ShortVersion{0.85}\LongVersion{1.0}}{%
 \begin{tikzpicture}[shorten >=1pt,scale=1.2]
  \node at (3.0,2.5) {\textbf{\underline{Client}}};

  \node[align=center] (sensor) at (1.6,1.75) {\small Monitored\\\small Data};
  \node[align=center,draw=black,rectangle,rounded corners] (encryption) at (4.0,1.75) {\footnotesize Encrypt with\\\footnotesize the CKKS scheme};
  \node[align=center,draw=black,rectangle,rounded corners] (decryption) at (4.0,0.5) {\footnotesize Decrypt with\\\footnotesize the TFHE scheme};
  \node[align=center] (result) at (1.6,0.5) {\small Monitoring\\\small Results};

  \draw[dashed,thick] (5.35,0) -- (5.35,2.7);

  \node at (8.5,2.5) {\textbf{\underline{Server}}};
  \node[align=center,draw=black,rectangle,rounded corners] (ckks_eval) at (7.0,1.75) {\footnotesize Arithmetic Eval. with\\\footnotesize the CKKS scheme};
  \node[align=center,draw=black,rectangle,rounded corners] (as_trlwe) at (10.0,1.2) {\footnotesize Scheme switching \\\footnotesize CKKS $\to$ TFHE};
  \node[align=center,draw=black,rectangle,rounded corners] (homfa) at (7.0,0.5) {\footnotesize Execute a DFA with\\\footnotesize the TFHE scheme~\cite{DBLP:conf/cav/BannoMMBWS22}};

  \path[->] (sensor) edge node{} (encryption)
            (encryption) edge node{} (ckks_eval)
            (ckks_eval) edge node{} (as_trlwe)
            (as_trlwe) edge node{} (homfa)
            (homfa) edge node{} (decryption)
            (decryption) edge node{} (result)
            ;
 \end{tikzpicture}}
 \caption{Our oblivious online STL monitoring protocol\LongVersion{ combining}\ShortVersion{ with} two FHE schemes:\LongVersion{the} CKKS and TFHE\LongVersion{ schemes}.\@ The DFA is constructed from the STL formula\LongVersion{ representing the monitored specification} beforehand.}%
 \label{figure:outline_protocol}
\end{figure}

Our protocol uses the \emph{Cheon-Kim-Kim-Song (CKKS) scheme}~\cite{DBLP:conf/asiacrypt/CheonKKS17} in addition to TFHE to capitalize on their strengths.
Specifically, we use CKKS for arithmetic evaluation and TFHE for logical evaluation.
\cref{figure:outline_protocol} outlines our protocol.
Before starting the protocol, the server constructs a DFA from the STL formula representing the monitored specification.
The DFA receives a series of truth values of the arithmetic predicates in the STL formula.
In our protocol, the client encrypts the monitored data using CKKS and sends the resulting ciphertexts to the server.
The client can use either a public or a private key for encryption.
We remark that multiple clients can participate in our protocol by sharing the public key.
The server uses CKKS to evaluate the polynomial functions in the monitored STL formula.
It then conducts scheme switching to obtain a ciphertext in TFHE representing the truth value of each predicate and executes the DFA using an algorithm from~\cite{DBLP:conf/cav/BannoMMBWS22}.
Finally, the resulting ciphertext is sent to the client and decrypted within TFHE.\@
The server can only use a private key for decryption.

\begin{ShortVersionBlock}
 In summary, the following data are public in our protocol.
\begin{itemize}
 \item The dimension of the monitored signal, \ie{} the set $\Var$ of variables.
 \item For each $\var \in \Var$, the range of $\var$ in the monitored signal\footnote{More precisely, it is sufficient if for each\LongVersion{ predicate} $\fun(\valuation_i) \geq \constant$ in the\LongVersion{ monitored} specification, the server knows the range of $\fun(\valuation_i)$ for any valuation $\valuation_i\colon \Var \to \R$ in the monitored signal.}.
 \item The security parameters of the FHE schemes.
 \item Public keys used for encryption.
 \item Special ciphertexts called \emph{evaluation keys}\LongVersion{ for some FHE operations}\ShortVersion{ used}, \eg{}\ShortVersion{ in} bootstrapping.
\end{itemize}
Notice that an upper bound of the depth of arithmetic operations (in particular, the multiplication) in CKKS can indirectly leak from the security parameters\LongVersion{ of CKKS}.\@

In contrast, we assume that the following data are private.
\begin{itemize}
 \item The values of the monitored signal are concealed from the server.
 \item The private keys are concealed from the server.
 \item The monitored specification is concealed from the client.
\end{itemize}
\end{ShortVersionBlock}

\begin{example}
 We continue with \cref{example:self_driving_car_second}.
 We consider a specification: ``If the (potentially obscured) vehicle ahead at a moderate distance has been driving slowly for more than two seconds, the ego vehicle should start decelerating''.
 The desired velocity can be computed, for example, based on
 the \emph{responsibility-sensitive safety (RSS)} model~\cite{DBLP:journals/corr/abs-1708-06374},
 which models the safe distance as a polynomial of both vehicles' velocities with some vehicle specific parameters as constants.
 Notice that such parameters can be kept secret.
 In our protocol, the vehicles encrypt their positions and velocities using a public key and send them to the server.
 Using CKKS, the server computes the desired distance $\mathit{dist}_{\mathrm{good}}$ and compares it with the observed distance $x_{\mathrm{front}} - x_{\mathrm{ego}}$. Then, the truth value of $x_{\mathrm{front}} - x_{\mathrm{ego}} \geq \mathit{dist}_{\mathrm{good}}$ is switched to TFHE and fed to the DFA constructed from the specification to detect the above undesired situation.
 Finally, an alert system (\eg{} a digital road sign), which we assume has a private key, receives the monitoring result and sends a precaution to the driver.
\end{example}

Our use of the FHE schemes is based on the following observation about FHE-based stream processing:
\begin{quotation}
 \noindent
 The CKKS scheme is suitable for element-wise pre-computation by polynomial arithmetic operations;
 The TFHE scheme is suitable for accumulation over a stream by (potentially) non-polynomial operations, including Boolean operations.
\end{quotation}

The reasoning behind this observation is summarized as follows.\LongVersion{ We will go into more detail in \cref{section:protocol}.}

\begin{itemize}
 \item CKKS is dedicated to but\LongVersion{ highly} efficient for polynomial arithmetic operations, whereas TFHE supports any operations\LongVersion{ encoded by look-up tables (LUTs)}, including Boolean operations.
 \item CKKS requires a bound of the depth of the operations to perform efficiently, while TFHE does not require it.
\end{itemize}

\begin{LongVersionBlock}
 There are several methods to switch from CKKS to TFHE schemes (or similar schemes).
 The existing methods for scheme switching are classified as follows:
 \begin{ienumeration}
 \item highly efficient but introducing errors, \eg{}~\cite{DBLP:journals/jmc/BouraGGJ20,DBLP:conf/sp/LuHHMQ21}, and
 \item highly accurate but slower, \eg{}~\cite{DBLP:journals/pvldb/RenSG0000LZ22,DBLP:conf/ccs/00010PMZJG23}.
 \end{ienumeration}
 The former approach is particularly used to realize non-polynomial but still (Lipschitz) continuous arithmetic operations (\eg{} sigmoid and ReLU in~\cite{DBLP:conf/sp/LuHHMQ21}), where a small error during scheme switching does not significantly change the outcome.
 The latter approach is used to extract a Boolean value via scheme switching, like our usage, since Boolean values flipped by scheme switching can drastically alter the outcome.
 We propose a scheme-switching algorithm in an approach similar to~\cite{DBLP:conf/ccs/00010PMZJG23} with the following improvements:
 \MW{To Kotaro: I updated the following. Can you confirm its correctness?}
 \begin{itemize}
 \item Our algorithm is based on a recent FHE operation called \emph{homomorphic decomposition}~\cite{DBLP:journals/tches/MaHWZW24}, which is known to be more efficient than \emph{homomorphic flooring}~\cite{DBLP:conf/asiacrypt/LiuMP22} used in~\cite{DBLP:conf/ccs/00010PMZJG23} with minor improvements.
 \item We optimize scheme switching by ignoring the ``lower bits'' of the ciphertexts.
 \item We scale the ciphertexts assuming that the server knows the possible range of monitored signals as domain knowledge.
 \end{itemize}
\end{LongVersionBlock}

\begin{ShortVersionBlock} 
\subsection{Scheme switching optimized with value range information}\label{section:protocol:ckks_to_tfhe}
 To combine CKKS and TFHE, we propose a scheme-switching method based on~\cite{DBLP:conf/ccs/00010PMZJG23} with the following improvements:
 \begin{itemize}
  \item Our method is based on a recent FHE operation called \emph{homomorphic decomposition}~\cite{DBLP:journals/tches/MaHWZW24}, which is known to be more efficient than \emph{homomorphic flooring}~\cite{DBLP:conf/asiacrypt/LiuMP22} used in~\cite{DBLP:conf/ccs/00010PMZJG23} with minor improvements.
  \item We optimize scheme switching by ignoring the ``lower bits'' of the ciphertexts.
  \item We scale the ciphertexts assuming that the server knows the possible range of monitored signals as domain knowledge.
\end{itemize}
\end{ShortVersionBlock}

 Conceptually, the scaling normalizes the encrypted values so that errors introduced by scheme switching do not cause overflow (as signed $N$-bit integers).
 Thanks to such normalization, our scheme switching can handle very small values even if we ignore the ``lower bits'' for optimization.
 We decide the scaling factor from the range of signals. %
 \ShortVersion{See Section~3.2 of~\cite{DBLP:journals/corr/abs-2405-16767} %
 for details\LongVersion{ on our scheme switching}.}

\begin{NoSecurityVersion} 
 Based on the client's algorithm above, we propose an oblivious online discrete-time STL monitoring protocol.
 Our protocol operates under specific assumptions on the client and server, which are the same as~\cite{DBLP:conf/cav/BannoMMBWS22}.
 We regard the server as \emph{honest-but-curious}~\cite{DBLP:books/cu/Goldreich2004}, \ie{} it follows the protocol but may obtain the client's private data from the obtained information. In contrast, we regard the client as \emph{malicious}, \ie{} it may deviate from the protocol.
 We prove that our protocol ensures privacy for both parties, \ie{} it conceals the monitored signals and results from the server, and the monitored specification from the client.
\end{NoSecurityVersion}

\begin{LongVersionBlock} 
 We implemented the oblivious discrete-time STL monitoring algorithms in C++20 and evaluated their efficiency through two case studies: monitoring
 \begin{ienumeration}
 \item blood glucose levels and
 \item vehicle behavior against RSS rules.
 \end{ienumeration}
 The experimental results suggest the practical relevance of our protocol.

 Overall, our contributions are summarized as follows.

 \begin{itemize}
 \item We propose an online oblivious monitoring protocol for discrete-time STL.\@
 \item We optimized scheme switching tailored to our purpose.
 \item We show the practical relevance of our protocol via experimental evaluations.
 \end{itemize}
\end{LongVersionBlock}

\begin{LongVersionBlock}

 \paragraph{Related work}\label{sec:related_work}
 \begin{table}[tbp]
 \centering
 \scriptsize
 \caption{Comparison of FHE-based monitoring methods.}%
 \label{table:related_methods}
 \begin{tabular}{c c c c}
  \toprule
  & Used FHE Scheme & Arithmetic operations & Specification is secret\\\midrule
  \textbf{Ours} & CKKS~\cite{DBLP:conf/asiacrypt/CheonKKS17} and TFHE~\cite{DBLP:journals/joc/ChillottiGGI20} & \goodFeature{Yes} & \goodFeature{Yes} \\
  \cite{DBLP:conf/cav/BannoMMBWS22}  & TFHE~\cite{DBLP:journals/joc/ChillottiGGI20} & \badFeature{No} & \goodFeature{Yes} \\ %
  \cite{DBLP:conf/trustbus/TriakosiaRTTSF22} & CKKS~\cite{DBLP:conf/asiacrypt/CheonKKS17} & \goodFeature{Yes} & \badFeature{No}\\ %
  \bottomrule
 \end{tabular}
 \end{table}

 \cref{table:related_methods} summarizes FHE-based monitoring methods.
 As previously mentioned, our method is based on~\cite{DBLP:conf/cav/BannoMMBWS22} and handles arithmetic operations by bridging the CKKS and TFHE schemes.
 Triakosia et al.~\cite{DBLP:conf/trustbus/TriakosiaRTTSF22} proposed a method for oblivious monitoring of manufacturing quality measures with CKKS.\@
 The approach in~\cite{DBLP:conf/trustbus/TriakosiaRTTSF22} is collaborative:
 the server conducts polynomial operations using CKKS, while
 the client conducts non-polynomial operations (\eg{} branching) without using FHE techniques.
 This is done by
 \begin{ienumeration}
 \item decrypting the ciphertexts sent from the server,
 \item conducting non-polynomial operations over plaintexts,
 \item encrypting the result, and
 \item sending it back to the server.
 \end{ienumeration}
 This collaborative approach inherently allows the client to access the monitored specification.
 In contrast, our monitoring algorithm runs entirely on the server, thereby ensuring the specification remains confidential and not exposed to the client.
\end{LongVersionBlock}

\begin{LongVersionBlock} 
 \paragraph{Organization of the paper}\label{section:organization}

 Following an overview of the preliminaries in \cref{sec:preliminaries}, we present our online oblivious STL monitoring protocol in \cref{section:protocol}.
 We show its experimental evaluation in \cref{section:experiments} and conclude in \cref{section:conclusion}.
\end{LongVersionBlock}

\begin{LongVersionBlock}
\section{Preliminaries}\label{sec:preliminaries}

We denote the reals, integers, naturals and positive naturals by $\R$, $\Z$, $\N$, and $\Zp$, respectively.
For $\real \in \R$, we let $\lfloor \real \rfloor \in \Z$ be the maximum integer satisfying $\lfloor \real \rfloor \leq \real$.
For a set $X$, we denote its powerset by $\powerset{X}$,
the set of infinite sequences of $X$ by $X^{\omega}$, and
the set of sequences of $X$ of length $n$ by $X^{n}$, where $n \in \N$.
We let $X^{*} = \bigcup_{n \in \N} X^{n}$ and $X^{\infty} = X^{*} \cup X^{\omega}$.
For $\word \in X^{*}$ and $\word' \in X^{\infty}$, we let $\word \cdot \word' \in X^{\infty}$ be their juxtaposition.
For $\word \in X^{*}$, we let $\prefix{\word}{n}$ be the prefix of $\word$ of length $n$ for $n \leq |\word|$,
where $|\word|$ is the length of $\word$.
We let $\emptyword$ be the empty sequence.
For a DFA $\A$, we denote its language by $\Lg(\A)$.

\subsection{Discrete-time signal temporal logic}\label{subsec:STL}
Let $\Var$ be the finite set of variables.
A (discrete-time) \emph{signal} $\signal \in {(\R^{\Var})}^{\infty}$ is a finite or infinite sequence of functions $\valuation_i\colon\Var \to \R$.
We\LongVersion{ also} call such $\valuation_i$ a (signal) valuation.

\emph{Signal temporal logic (STL)}~\cite{DBLP:conf/formats/MalerN04} is a widely used formalism to represent behaviors of signals.
We use its variant for discrete-time signals.

\begin{definition}
 [signal temporal logic]%
 \label{def:STL}
 For a finite set $\Var$ of variables, the syntax of \emph{signal temporal logic (STL)} is defined as follows, %
where\footnote{In our implementation and experiments, we extend $\fun$ to receive a bounded history of signal valuations, \ie{} $\fun\colon {(\R^{\Var})}^N \to \R$, where $N \in \Zp$ is the history bound.} $\fun\colon \R^{\Var} \to \R$, $\constant \in \R$, and $i,j \in \N \cup \{+\infty\}$ satisfying $i < j$.
\[
 \fml, \fml' \Coloneq \top \mid \fun \geq \constant \mid \neg \fml \mid \fml \lor \fml' \mid \NextOp \fml \mid \fml \UntilOp{[i,j)} \fml'
\]
For an STL formula $\fml$, we let $\Pred(\fml)$ be the set of inequalities $\fun \geq \constant$ in $\fml$.
We define the following as syntax sugar:
$\bot \equiv \neg \top$,
$\fun < \constant \equiv \neg (\fun \geq \constant)$,
$\fml \imply \fml' \equiv (\neg \fml) \lor \fml'$,
$\fml \land \fml' \equiv \neg (\neg \fml \lor \neg \fml')$,
$\DiaOp{[i,j)} \fml \equiv \top \UntilOp{[i,j)} \fml$,
$\BoxOp{[i,j)} \fml \equiv \neg \DiaOp{[i,j)} \neg \fml$,
$\fml \Release{[i,j)} \fml' \equiv \neg (\neg \fml \UntilOp{[i,j)} \neg \fml')$, and
$\fml \TRelease{[i,j)} \fml' \equiv \fml \Release{[i, j)} (\fml \lor \fml')$.
\end{definition}
For an STL formula $\fml$, an \emph{infinite} signal $\signalWithInfiniteInside$, and $k \in \N$,
the satisfaction relation $\satisfy{\signal}{k}{\fml}$ is inductively defined as follows.
\begin{gather*}
 \satisfy{\signal}{k}{\top} \qquad
 \satisfy{\signal}{k}{\fun \geq \constant} \iff \fun(\signal_k) \geq \constant \qquad
 \satisfy{\signal}{k}{\neg \fml} \iff (\signal, k) \not\models \fml \\
 \satisfy{\signal}{k}{\fml \lor \fml'} \iff \satisfy{\signal}{k}{\fml} \lor \satisfy{\signal}{k}{\fml'} \quad
 \satisfy{\signal}{k}{\NextOp{\fml}} \iff \satisfy{\signal}{k+1}{\fml}\\
 \begin{aligned}
  \satisfy{\signal}{k}{\fml \UntilOp{[i,j)} \fml'} &\iff
  \exists l \in \{k+i,k+i+1,\dots,k+j-1\}.\,\satisfy{\signal}{l}{\fml'} \\
  &\qquad\quad\land \forall m \in \{k,k+1,\dots,l-1\}.\, \satisfy{\signal}{m}{\fml}
 \end{aligned}
\end{gather*}

We let $\signal \models \fml$ if we have $\satisfy{\signal}{0}{\fml}$.
An STL formula $\fml$ is\LongVersion{ called} a \emph{safety} property if
for any infinite signal $\signal$
satisfying $\signal \not\models \fml$,
there is a finite prefix $\signal_f$ of $\signal$ such that
for any infinite signal $\signal'$, we have
$\signal_f \cdot \signal' \not\models \fml$.
Such a prefix is called \emph{bad}~\cite{DBLP:journals/fmsd/KupfermanV01}.
Since a discrete-time STL formula $\fml$ is easily representable by an LTL~\cite{DBLP:conf/focs/Pnueli77} formula with $\Pred(\fml)$ as atomic propositions,
violation of a safety STL formula $\fml$ can be monitored by a DFA $\Monitor_{\fml}$ over $\powerset{\Pred(\fml)}$.
Namely, the DFA $\Monitor_{\fml}$ takes a sequence $a_1, a_2, \dots a_n \in {(\powerset{\Pred(\fml)})}^*$
and decides if the corresponding prefix $\signal_f$ of the signal under monitoring is bad or not for $\fml$,
where $a_i = \{\fun \geq \constant \in \Pred(\fml) \mid \fun(\signal_i) \geq \constant\}$.
See, \eg{}~\cite{DBLP:journals/tosem/BauerLS11}, for a construction of such\LongVersion{ a DFA} $\Monitor_{\fml}$ (for LTL).

\subsection{Fully Homomorphic Encryption}\label{subsec:FHE}
\emph{Homomorphic encryption (HE)}~\cite{rivest1978data} is a kind of encryption that enables data evaluation without decryption, \ie{}
for a plaintext $\plain$ and a function $f$,
$f(\plain)$ is (nearly) equal to
$\dec(f_{\mathrm{HE}}(\enc(\plain)))$, where $\enc(x)$ and $\dec(y)$ are the encryption and decryption results
and $f_{\mathrm{HE}}$ is the HE counterpart of $f$.
\emph{Fully HE (FHE)}~\cite{DBLP:conf/stoc/Gentry09} is a kind of HE such that any function $f$ can be evaluated without decryption.

In FHEs, ciphertexts based on the \emph{learning with error (LWE)} problem~\cite{DBLP:journals/jacm/Regev09} and its ring variant \emph{(RLWE)}~\cite{DBLP:journals/jacm/LyubashevskyPR13} are widely used.
Let $n, \quotient \in \Zp$ be security parameters.
An LWE ciphertext $\cipher_{\quotient}$ represents a value in the quotient ring $\qRing$ of $\Z$ modulo $\quotient$.
An RLWE ciphertext $\RLWE{\cipher}_{\quotient, N}$ represents a polynomial of degree $N - 1$ with coefficients in $\qRing$.
We omit $\quotient$ and $N$ if they are clear from the context.
For both LWE and RLWE, a public key $\publicKey$ and a private key $\key$ can be used for encryption while decryption requires the private key.

For security purpose,
small \emph{noise} is added to LWE and RLWE ciphertexts,
and encryption and decryption slightly change the value, \ie{}
for a plaintext $\plain$, $\dec(\enc(\plain))$ is slightly different from $\plain$.
Each FHE scheme has its approach to handle the noise, which we will review later.
However, the noise increases over FHE operations, and eventually, the result of the decryption largely deviates from the expected value, which is considered as a failure.
\LongVersion{For example, for LWE ciphertexts, homomorphic addition
doubles the noise on average.}
The following two approaches are widely used to address this issue:
\begin{ienumeration}
 \item Assuming the number of applications of FHE operations, the noise is sampled so that the decryption is successful throughout the computation;
 \item The noise is reduced by a special FHE operation called \emph{bootstrapping}~\cite{DBLP:conf/stoc/Gentry09} before it becomes too large.
\end{ienumeration}
\begin{table}[tbp]
 \centering
 \scriptsize
 \caption{Comparison of the TFHE and CKKS schemes.}\label{table:TFHE_vs_CKKS}
 \begin{tabular}{lccc}
  \toprule
  & Supported Operations & Our usage & Noise reduction by bootstrapping \\
  \midrule
  CKKS & Polynomial operations & Arithmetic operations & Slow \\
  TFHE & Any (by LUTs) & Logical operations & Relatively fast \\
\bottomrule
\end{tabular}
\end{table}
\begin{table*}[tbp]
 \centering
 \scriptsize
 \caption{Summary of the ciphertexts used in our protocol.}\label{table:ciphertexts}
 \begin{tabular}{lcccccc}
  \toprule
  & Encoded values & Encoded values & Example usage in & Notation in \\
  & in CKKS & in TFHE & our protocol & this paper\\
  \midrule
  LWE & --- & Boolean value & Monitoring results & bold font, \eg{} $\cipher$ \\
  RLWE & (Approx.) Reals & Boolean array & Signal valuations & bold font + overline, \eg{} $\RLWE{\cipher}$ \\
  RGSW & --- & Boolean value & Truth values of predicates & bold font + tilde, \eg{} $\RGSW{\cipher}$ \\
  \bottomrule
 \end{tabular}
\end{table*}

Among various FHE schemes, \emph{Cheon-Kim-Kim-Song scheme (CKKS)}~\cite{DBLP:conf/asiacrypt/CheonKKS17} and \emph{FHE over the Torus (TFHE)}~\cite{DBLP:journals/joc/ChillottiGGI20} are two of the most widely used schemes.
\cref{table:TFHE_vs_CKKS} summarizes their comparison.
\cref{table:ciphertexts} summarizes the encoded value for each kind of ciphertexts in each scheme and our specific usage.

In CKKS, RLWE ciphertexts are usually used.
CKKS is typically used to approximately encode real values and apply polynomial operations, \eg{} addition, subtraction, and multiplication.
When representing a real value $\real \in \R$ by an RLWE ciphertext\footnote{For simplicity, we omit the embedding of multiple values as roots of a polynomial~\cite{DBLP:conf/asiacrypt/CheonKKS17} for vectorization, which we do not use in our implementation.},
we first approximate $\real$ by a multiplication of $\plain \in \qRing$ and a positive floating-point number $\scale$, where $\plain$ represents a bit-encoding of a signed integer.
Then, we use a pair $(\scale, \RLWE{\cipher}_{q, N})$ of $\scale$ and an RLWE ciphertext $\RLWE{\cipher}_{q, N}$ encrypting a polynomial $\RLWE{\plain}$ with constant term $\plain$ to represent $\real$.
Notice that $\scale$ is not encrypted because it is typically chosen independently of $\real$.
We usually omit $\scale$ and simply write $\RLWE{\cipher}_{q, N}$.
We let $\encCKKS$ and $\decCKKS$ be the encryption and decryption with the above encoding.
In CKKS, errors are simply ignored, assuming that they are much smaller than, \eg{} noise in sampling and numeric error.
While CKKS is efficient for polynomial operations, it does not support non-polynomial operations.
Bootstrapping in CKKS is known to be slow~\cite{DBLP:conf/eurocrypt/CheonHKKS18}, and thus, CKKS is typically employed in situations where an upper bound of the number of applications of FHE operations is known beforehand.
In TFHE, both LWE and RLWE ciphertexts are used.
TFHE supports a multiplexer operation: an FHE operation to select one of the values of the given ciphertexts according to the value of another ciphertext (called a control bit).
By combining multiplexers, one can implement a look-up table (LUT), which allows to encode any operations with TFHE.\@
In this paper, we use TFHE to encode Boolean values and logical operations, such as AND and OR.\@
Specifically, we use LWEs to represent Boolean values and RLWEs to represent Boolean arrays.
One typical encoding used for LWEs is such that $\plain \in \qRing$ represents $\top$ if and only if $\plain \in \{0, 1, \dots, \quotient / 2 - 1\}$.
The encoding for RLWEs is similar: we use coefficients as an array of values.
The result of the decryption (as a Boolean value) is successful if
the noise is small\LongVersion{ enough,} and the plaintext is in the expected range.
We let $\encTFHE$ and $\decTFHE$ be the encryption and decryption with the above encoding.
Although TFHE supports any operations, it is not as fast as CKKS for polynomial operations, \eg{} a single multiplication of two 16-bit integer values takes more than a few seconds~\cite{DBLP:journals/joc/ChillottiGGI20}.
Bootstrapping is relatively fast in TFHE, and\LongVersion{ thus,} TFHE is suitable for handling unbounded length of data, reducing the noise by bootstrapping.
Another type of ciphertexts called \emph{RGSW}~\cite{DBLP:conf/crypto/GentrySW13} are also used in TFHE.\@
An RGSW ciphertext is, roughly speaking, a collection of RLWE ciphertexts.
When conducting logical operations\LongVersion{ with TFHE}, an RGSW ciphertext represents a Boolean value that is used, \eg{} as the control bit of multiplexers.

\subsection{Online algorithm for oblivious DFA execution}\label{subsec:HomFA}
Banno et al.~\cite{DBLP:conf/cav/BannoMMBWS22} proposed two TFHE-based algorithms (\REVERSE{} and \BLOCK{}) to obliviously execute a DFA over $\{\top, \bot\}$.
They use these algorithms for \emph{online} LTL monitoring by first constructing a DFA $\A$ over $\{\top, \bot\}$ from an LTL formula $\fml$ over atomic propositions $\AP$.
The construction is by
\begin{ienumeration}
 \item making a DFA over $\powerset{\AP}$ from $\fml$, \eg{} with~\cite{DBLP:journals/tosem/BauerLS11}, and
 \item modifying its alphabet to $\{\top, \bot\}$ by encoding each $\action^{\AP} \in \powerset{\AP}$ by $\action_1, \action_2,\dots,\action_{|\AP|} \in {\{\top, \bot\}}^{|\AP|}$.
\end{ienumeration}

Given a sequence $\encWordSequence$ of RGSW ciphertexts representing the input word $\wordWithInside \in {\{\top, \bot\}}^*$,
their algorithms return a sequence of LWE ciphertexts representing if prefixes of the input word are accepted by $\A$.
Their algorithms incrementally process the input word and return prefixes of the result before obtaining the entire input.
Thus, they are suitable for online monitoring.

There algorithms are based on another TFHE-based algorithm in~\cite{DBLP:conf/cav/BannoMMBWS22} for \emph{offline} execution of DFAs.
The offline algorithm reads the input word $\word$ \emph{from the tail} to achieve DFA execution only using multiplexers, which is fast for TFHE.\@
In \REVERSE{}, the offline algorithm is applied to the reversed DFA $\reverse{\A}$, \ie{} the minimum DFA such that $\word \in \Lg(\A) \iff \reverse{\word} \in \Lg(\reverse{\A})$, where $\reverse{\word}$ is the reversed word of $\word$.
\REVERSE{} returns the result for each prefix $\prefix{\word}{i}$ of $\word$.

In \BLOCK{}, they use a variant of the offline algorithm that also returns an RLWE ciphertext representing the state after reading the given ciphertexts.
The input word $\word$ is split into a series of words $\word_1, \word_2, \dots, \word_N \in {({\{\top, \bot\}}^{\BlockSize})}^{*}$ of length $\BlockSize \in \Zp$,
each word $\word_j$ is fed to the modified offline algorithm from $j = 1$ to $j = N$,
and the result for each block is used to choose the state after reading $\word_1, \word_2, \dots, \word_j$.
Since \BLOCK{} feeds each $\word_j \in {\{\top, \bot\}}^{\BlockSize}$ to the offline algorithm,
it returns the result for prefixes $\prefix{\word}{i}$ such that $i = \BlockSize \times j$ for some $j \in \N$.

While \REVERSE{} has better scalability with respect to the size of the original LTL formula in terms of the worst case complexity,
\cite{DBLP:conf/cav/BannoMMBWS22} reports that their practical superiority is inconclusive:
\REVERSE{} is typically faster than \BLOCK{} but \REVERSE{} takes much longer time, \eg{} when the reversed DFA $\reverse{\A}$ is huge.

\end{LongVersionBlock}

\begin{LongVersionBlock}
 \section{Oblivious online discrete-time STL monitoring}\label{section:protocol}
 We propose a protocol for oblivious online STL monitoring based on the algorithms in\ShortVersion{~\cite{DBLP:conf/cav/BannoMMBWS22}}\LongVersion{ \cref{subsec:HomFA}}.
 In our protocol, the following data are public.
 \begin{itemize}
 \item The dimension of the monitored signal, \ie{} the set $\Var$ of variables.
 \item For each $\var \in \Var$, the range of $\var$ in the monitored signal\footnote{More precisely, it is sufficient if for each\LongVersion{ predicate} $\fun(\valuation_i) \geq \constant$ in the\LongVersion{ monitored} specification, the server knows the range of $\fun(\valuation_i)$ for any valuation $\valuation_i\colon \Var \to \R$ in the monitored signal.}.
 \item The security parameters of the FHE schemes.
 \item The scaling factor $\scale$ of each RLWE ciphertext within CKKS.
 \item Public keys used for encryption.
 \item Special ciphertexts called \emph{evaluation keys}\LongVersion{ for some FHE operations}\ShortVersion{ used}, \eg{}\ShortVersion{ in} bootstrapping.
 \end{itemize}
Notice that an upper bound of the depth of the arithmetic operations (in particular, the multiplication) in CKKS can indirectly leak from the security parameters of CKKS.\@

 In contrast, we assume that the following data are private.
 \begin{itemize}
 \item The values of the monitored signal are concealed from the server.
 \item The private keys are concealed from the server.
 \item The monitored specification is concealed from the client.
 \end{itemize}

 From the server's perspective, the primary task in this protocol is formulated as follows, where the public data are omitted.
 We mainly focus on this\LongVersion{ server-side's} problem.

 \defProblem{Oblivious safety discrete-time STL monitoring}{%
 A sequence $\encSignalSequence$ of RLWE ciphertexts encrypting the monitored signal $\signal \in {(\R^{\Var})}^{*}$ of length $\signalLength$, a discrete-time STL formula $\fml$ representing a safety property, and the refresh interval $\BlockSize \in \Zp$ of the results}{%
 A sequence $\encResultSequence$ of LWE ciphertexts \st{} for each $i \in \{1,2,\dots,n\}$, $\decTFHE(\encResult_{i}) = \top$ if and only if $\prefix{\signal}{\lfloor i/\BlockSize \rfloor \BlockSize}$ is a bad prefix for $\fml$}

 \subsection{Overview of our oblivious online STL monitoring algorithm}\label{section:protocol:overview}
 \SetKwFunction{FMakeMonitor}{makeDFA}
 \SetKwFunction{FInitializeMonitor}{initializeDFAExec}
 \SetKwFunction{FEvalCKKS}{evalCKKS}
 \SetKwFunction{FCKKSToTFHE}{isPositive}
 \SetKwFunction{FFeed}{feed}
 \SetKwFunction{FIsAccepting}{isAccepting}
 \begin{algorithm}[tbp]
 \caption{Outline of our oblivious online discrete-time STL monitoring algorithm with CKKS and TFHE schemes.}%
 \label{algorithm:protocol:outline}
 \DontPrintSemicolon{}
 \scriptsize
 \newcommand{\myCommentFont}[1]{\texttt{\scriptsize{#1}}}
 \SetCommentSty{myCommentFont}
 \KwIn{A safety STL formula $\fml$, $\BlockSize \in \Zp \cup \{{*}\}$, and RLWE ciphertexts $\encSignalSequence$ encrypting the monitored signal $\signal = \signalFiniteInside \in {(\R^{\Var})}^{*}$ within CKKS}
 \KwOut{LWE ciphertexts $\encResultSequence$ within TFHE such that for each $i \in \{1,2,\dots,n\}$,
 if $\BlockSize = *$, $\decTFHE(\encResult_i) = \top$ if and only if $\prefix{\signal}{i}$ is bad for $\fml$, and otherwise,
 $\decTFHE(\encResult_i) = \top$ if and only if $\prefix{\signal}{\lfloor i/\BlockSize \rfloor \BlockSize}$ is bad for $\fml$
 }
 $\Monitor_{\fml} \gets \FMakeMonitor(\fml)$\tcp*[r]{We have $\Lg(\Monitor_{\fml}) = \{\BitEncoding(\signal) \mid \text{$\signal$ is bad for $\fml$}\}$.}\label{algorithm:protocol:outline:dfa-construction}
 \tcp{Initialize with one of the algorithms in~\cite{DBLP:conf/cav/BannoMMBWS22}}
 \If(\tcp*[f]{We use $\BlockSize = {*}$ to select \REVERSE{}, where refresh interval is always 1.}){$\BlockSize = {*}$} {
   $\mathsf{DFAExec} \gets \FInitializeMonitor(\REVERSE{}, \Monitor_{\fml})$\label{algorithm:protocol:outline:monitor-reverse-initialization}\;
 } \Else{
   $\mathsf{DFAExec} \gets \FInitializeMonitor(\BLOCK{}, \Monitor_{\fml}, \BlockSize)$\label{algorithm:protocol:outline:monitor-block-initialization}\;
 }
 \For{$i \gets 0$ \KwTo{} $\signalLength - 1$} {\label{algorithm:protocol:outline:signal-loop-begin}
   \For{$\fun \geq \constant \in \Pred(\fml)$} {\label{algorithm:protocol:outline:pred-loop-begin}
     $\diffCKKS \gets \FEvalCKKS(\fun(\encSignal_{1 + i \times |\Var|}, \encSignal_{2 + i \times |\Var|}, \dots, \encSignal_{|\Var| + i \times |\Var|}) - \constant)$\label{algorithm:protocol:outline:CKKS-diff}\;
     \tcp{scheme switching; $\decTFHE(\signTFHE) = \top \iff \fun(\signal_i) \geq \constant$ holds.}
     $\signTFHE \gets \FCKKSToTFHE(\diffCKKS)$\label{algorithm:protocol:outline:ckks-to-tfhe}\;
     \FFeed{$\mathsf{DFAExec}$, $\signTFHE$}\label{algorithm:protocol:outline:feed-tlwe}\;
   }
   $\encResult_i \gets \FIsAccepting(\mathsf{DFAExec})$\label{algorithm:protocol:outline:get-result}\;
 }
 \end{algorithm}

 \cref{algorithm:protocol:outline} outlines our algorithm for oblivious online STL monitoring.
 We use RLWE ciphertexts within CKKS for the inputs $\encSignalSequence$ and LWE ciphertexts within TFHE for the outputs $\encResultSequence$.
 First, we convert the given STL formula $\fml$ to a DFA $\Monitor_{\fml}$ satisfying
 $\Lg(\Monitor_{\fml}) = \{\BitEncoding(\signal) \mid \text{$\signal$ is bad for $\fml$}\}$, where $\BitEncoding$ is the function to encode ${(\R^{\Var})}^*$ by ${\{\top, \bot\}}^*$ by evaluating $\Pred(\fml)$ (\cref{algorithm:protocol:outline:dfa-construction}).
 Then, we initialize one of the oblivious online DFA execution algorithms (\ie{} \REVERSE{} or \BLOCK{}) in~\cite{DBLP:conf/cav/BannoMMBWS22} (\cref{algorithm:protocol:outline:monitor-reverse-initialization,algorithm:protocol:outline:monitor-block-initialization}) using $\BlockSize$.

 The main loop of monitoring starts at \cref{algorithm:protocol:outline:signal-loop-begin}.
 For each predicate $\fun \geq \constant$ in $\Pred(\fml)$, we obtain a ciphertext representing the truth value of $\fun \geq \constant$ for $\signal_i$ by
 homomorphically computing $\fun(\signal_i) - \constant$ with CKKS (\cref{algorithm:protocol:outline:CKKS-diff}) and
 constructing an RGSW ciphertext $\signTFHE$ with TFHE satisfying $\decTFHE(\signTFHE) = \top \iff \fun(\signal_i) \geq \constant$ (\cref{algorithm:protocol:outline:ckks-to-tfhe}).
 Then, we feed $\signTFHE$ to the DFA execution algorithm (\cref{algorithm:protocol:outline:feed-tlwe}) and
 use the LWE ciphertext $\FIsAccepting(\mathsf{DFAExec})$ representing if the monitor is at the accepting state as the result $\encResult_i$ (\cref{algorithm:protocol:outline:get-result}).
 Notice that when we use \BLOCK{}, the result $\encResult_i$ is periodically updated
 because $\mathsf{DFAExec}$ consumes the given ciphertexts and updates its internal state only if $i = j \BlockSize$ for some $j \in \Zp$.

 Our usage of CKKS and TFHE is justified as follows.
 As we mentioned in \cref{subsec:FHE}, for polynomial arithmetic operations, CKKS is superior to TFHE, whereas logical operations require TFHE.\@
 Moreover, since the signal length $\signalLength$ is unknown to the server when the monitoring starts, bootstrapping is necessary to execute the DFA $\Monitor_{\fml}$, which is advantageous for TFHE.\@
 In contrast, the depth of the FHE operations to evaluate $\fun(\encSignal_{1 + i \times |\Var|}, \encSignal_{2 + i \times |\Var|}, \dots, \encSignal_{|\Var| + i \times |\Var|}) - \constant$ is fixed from $\fml$, and bootstrapping is unnecessary if the security parameters are appropriately configured.
\end{LongVersionBlock}
\ShortVersion{ \SetKwFunction{FEvalCKKS}{evalCKKS}} 
\SetKwFunction{FHomDecomp}{HomDecomp}

\begin{LongVersionBlock}
\subsection{Scheme switching optimized with value range information}\label{section:protocol:ckks_to_tfhe}
 \SetKwFunction{FSampleExtract}{RLWEToLWE}
 \SetKwFunction{FCircuitBootstrapping}{LWEToRGSW}
 \SetKwFunction{FModSwitch}{modSwitch}
 \SetKwFunction{FPartialHomDecomp}{PartHomDecomp}
 \begin{algorithm}[tbp]
 \caption{Outline of \texttt{isPositive}.}%
 \label{algorithm:protocol:ckks_to_tfhe}
 \DontPrintSemicolon{}
 \scriptsize
 \newcommand{\myCommentFont}[1]{\texttt{\scriptsize{#1}}}
 \SetCommentSty{myCommentFont}
 \KwIn{An RLWE ciphertext $\diffCKKS$, its scaling factor $\scale \in \R$ for CKKS, the maximum value $\diffMax \in \Rp$ of $|\decCKKS(\diffCKKS)|$, the switching size $k \in \Zp$ in $\FHomDecomp$, and the modulus $\quotientCKKS$ and $\quotientTFHE$ of\LongVersion{ the quotient ring in} the input and output ciphertexts}
 \KwOut{An RGSW ciphertext $\signTFHE[\RGSW]$ such that if $\signTFHE[\RGSW]$ and $\diffCKKS$ are decrypted correctly, $\decTFHE(\signTFHE[\RGSW]) = \top \iff \decCKKS(\diffCKKS) \geq 0$ holds}
 $\diffScaledCKKS \gets \FEvalCKKS(\diffCKKS \times \frac{\lfloor \quotientCKKS / 2 - 1\rfloor}{\diffMax \times \scale})$\tcp*[r]{Scale the value space keeping the sign}\label{algorithm:protocol:ckks_to_tfhe:amplify}
 $\diffSwitched[\RLWE] \gets \FModSwitch(\diffScaledCKKS, \quotientCKKS \to \quotient_{\mathrm{tmp}})$\tcp*[r]{Switch to the intermediate modulus $\quotient_{\mathrm{tmp}}$}\label{algorithm:protocol:ckks_to_tfhe:rescaling}
 $\diffSwitched \gets \FSampleExtract(\diffSwitched[\RLWE])$\label{algorithm:protocol:ckks_to_tfhe:sample_extract}\;
 $\signTFHE[] \gets \FHomDecomp(\diffSwitched, k)$\tcp*[r]{Switch the security parameters\LongVersion{ for TFHE} preserving the MSB}\label{algorithm:protocol:ckks_to_tfhe:key_switching}
 $\signTFHE \gets \FCircuitBootstrapping(\signTFHE[])$\label{algorithm:protocol:ckks_to_tfhe:circuit_bootstrapping}\;
 \end{algorithm}

 \MW{To Kotaro: Please read this section}
 \LongVersion{Among the steps in \cref{algorithm:protocol:outline},
 the most non-trivial and computationally demanding one is \texttt{isPositive} at \cref{algorithm:protocol:outline:ckks-to-tfhe},
 the construction of an RGSW ciphertext in TFHE.\@
 We elaborate it here.}

 As we explained in \cref{subsec:FHE},
 when we encrypt $\real \in \R$ within CKKS,
 we use $\plain \in \qRing$ satisfying %
 $\real \geq 0$ if and only if $\plain \in \{0, 1, \dots, \lfloor \quotient / 2 - 1\rfloor\}$.
 Since this coincides with the encoding of $\top$ within TFHE if security parameters are appropriately switched,
 we can obtain an RGSW ciphertext $\signTFHE$ satisfying $\decTFHE(\signTFHE) = \top \iff \fun(\signal_i) \geq \constant$ by
 converting the given RLWE ciphertext to an RGSW ciphertext with appropriate switching of security parameters.
 In this construction, switching of security parameters is the most computationally demanding
 even if we use a recent FHE operation called \emph{homomorphic decomposition} ($\FHomDecomp$)~\cite{DBLP:journals/tches/MaHWZW24}, which is more efficient than the operation used in~\cite{DBLP:conf/ccs/00010PMZJG23}.
 Thus, we optimize this step to improve the efficiency of the overall algorithm.
 Intuitively, $\FHomDecomp$ slices the binary representation of an integer into multiple short bit sequences.
 Given an LWE ciphertext $\cipher_{\quotient}$ with modulus $\quotient$ and $W, k \in \Zp$,
 $\FHomDecomp$ decomposes the first $W \times k$ bits of $\cipher_{\quotient}$ into LWE ciphertexts $\cipher_{\quotient', 1}, \cipher_{\quotient', 2}, \dots, \cipher_{\quotient', k}$ with modulus $\quotient'$ such that
 the first $W$ bits of each $\cipher_{\quotient', i}$ represents the ``$i$-th block'' of $\cipher_{\quotient}$, where each block consists of $W$ bits.
 The decomposed ciphertexts may have different security parameters from $\cipher_{\quotient}$.

 Although the result of decryption as a Boolean value solely depends on the ``most significant bit (MSB)'' of $\plain$,
 naively, we need to decompose all the blocks of $\cipher_{\quotient}$.
 This is because if $\plain$ is near the boundaries of flipping (\ie{} $0$, $\lfloor \quotient / 2 - 1 \rfloor$, $\lfloor \quotient / 2 - 1 \rfloor + 1$, and $\quotient - 1$),
 the higher blocks are even closer to the boundaries and quite fragile to the noise by scheme switching.
 Nevertheless, if we have the possible range of the encrypted raw value $\plain$ as domain knowledge,
 we can scale $\plain$ so that it is sufficiently far from the boundaries.
 The range of $\plain$ can be estimated from the range of each $\var \in \Var$ in the monitored signal, $\scale$ in CKKS, and the monitored formula $\fml$.
 With such preprocessing, we apply $\FHomDecomp$ only partly, \eg{} to the higher half of the blocks.

 \cref{algorithm:protocol:ckks_to_tfhe} outlines \texttt{isPositive}.
 In \cref{algorithm:protocol:ckks_to_tfhe:amplify}, we scale $\diffCKKS$ so that the range of $\decCKKS(\diffCKKS)$ almost matches the range\LongVersion{ $[- \lfloor \quotientCKKS / 2\rfloor / \scale, \lfloor \quotientCKKS / 2 - 1\rfloor / \scale]$} of the ciphertext.
 In \cref{algorithm:protocol:ckks_to_tfhe:rescaling}, we switch the modulus from $\quotientCKKS$ to $\quotientTFHE$.
 In \cref{algorithm:protocol:ckks_to_tfhe:sample_extract}, we use an FHE operation called \emph{sample extraction}~\cite{DBLP:journals/joc/ChillottiGGI20}
 to obtain an LWE ciphertext $\diffSwitched$ satisfying $\decTFHE(\diffSwitched) = \top \iff \decCKKS(\diffSwitched[\RLWE]) \geq 0$ if the decryption is successful.
 In \cref{algorithm:protocol:ckks_to_tfhe:key_switching}, we apply homomorphic decomposition to obtain another LWE ciphertext $\signTFHE[]$ with different security parameters, preserving the decryption result if it is successful.
 Lastly, in \cref{algorithm:protocol:ckks_to_tfhe:circuit_bootstrapping}, we use another FHE operation called \emph{circuit bootstrapping}~\cite{DBLP:journals/joc/ChillottiGGI20} to translate $\signTFHE[]$ to an RGSW ciphertext $\signTFHE$, preserving the decryption result if it is successful. %
 Notice that even if we do not have $\diffMax$, we can still use this algorithm by 
 \begin{ienumeration}
 \item skipping \cref{algorithm:protocol:ckks_to_tfhe:amplify} and
 \item applying $\FHomDecomp$ to all the bits, \ie{} $k = \quotient_{\mathrm{tmp}}$ in \cref{algorithm:protocol:ckks_to_tfhe:key_switching}.
 \end{ienumeration}

\end{LongVersionBlock}

\begin{LongVersionBlock} 
 \subsection{Correctness and complexity of our algorithm}\label{section:protocol:properties}

 In what follows, we assume that decryption is always successful.
 From the discussion in \cref{section:protocol:ckks_to_tfhe},
 we have the following correctness of \cref{algorithm:protocol:ckks_to_tfhe}.

 \begin{lemma}
 [correctness of \cref{algorithm:protocol:ckks_to_tfhe}]%
 \label{lemma:ckks_to_tfhe:correctness}
 Given an RLWE ciphertext $\diffCKKS$,
 its scaling factor $\scale$,
 $\diffMax \in \Rp$, 
 the switching size $k \in \Zp$, and
 the modulus $\quotientCKKS$ and $\quotientTFHE$ of\LongVersion{ the quotient ring used in} the input and output ciphertexts,
 if $|\decCKKS(\diffCKKS)| < \diffMax$ holds,
 \cref{algorithm:protocol:ckks_to_tfhe} returns an RGSW ciphertext $\signTFHE$ such that we have
 $\decCKKS(\diffCKKS) \geq 0 \iff \decTFHE(\signTFHE) = \top$.
 \qed{}
 \end{lemma}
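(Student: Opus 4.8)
The statement is a correctness claim about the five-line pipeline in \cref{algorithm:protocol:ckks_to_tfhe}, so my plan is to set $v = \decCKKS(\diffCKKS)$ and track the \emph{sign} of $v$ --- equivalently, the most significant bit (MSB) of the underlying plaintext --- line by line, proving a per-line invariant and then chaining them by transitivity to reach $v \geq 0 \iff \decTFHE(\signTFHE) = \top$. A crucial simplification is that the paragraph preceding the lemma assumes decryption is always successful, so I would never have to reason about noise growth: each FHE operation may be treated as acting exactly according to its intended plaintext semantics, and the whole argument collapses to bookkeeping of the signed-integer encoding under one positive scaling, one modulus switch, and three structural conversions.

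The crux --- and where the hypothesis $|v| < \diffMax$ is consumed --- is \cref{algorithm:protocol:ckks_to_tfhe:amplify}. Multiplying by the strictly positive constant $\lfloor \quotientCKKS/2 - 1\rfloor/(\diffMax \times \scale)$ leaves the sign untouched, giving $\decCKKS(\diffScaledCKKS) \geq 0 \iff v \geq 0$. The role of the $1/\scale$ factor is to convert from represented-value units to plaintext-integer units (via the CKKS convention $\plain \approx \real/\scale$ recalled in \cref{subsec:FHE}), so that the scaled plaintext integer is approximately $(v/\diffMax)\,\lfloor \quotientCKKS/2 - 1\rfloor$. The hypothesis $|v| < \diffMax$ then forces this integer strictly inside the window $(-\lfloor \quotientCKKS/2 - 1\rfloor,\ \lfloor \quotientCKKS/2 - 1\rfloor)$, i.e.\ the non-wrapping region of the signed encoding, so its MSB records the genuine sign of $v$ rather than a modular-reduction artifact. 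I would also check the edge $v = 0 \mapsto \top$, which matches the $\geq$ in the claim and explains why the inequality must be strict.

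For the remaining four lines I would invoke the established semantics of each operation as a black box. Modulus switching in \cref{algorithm:protocol:ckks_to_tfhe:rescaling} preserves the MSB in passing from $\quotientCKKS$ to $\quotient_{\mathrm{tmp}}$; sample extraction in \cref{algorithm:protocol:ckks_to_tfhe:sample_extract} yields the LWE ciphertext $\diffSwitched$ with $\decTFHE(\diffSwitched) = \top \iff \decCKKS(\diffSwitched[\RLWE]) \geq 0$, extracting the constant term whose sign we have already controlled; homomorphic decomposition in \cref{algorithm:protocol:ckks_to_tfhe:key_switching} changes the security parameters while, by its defining property from~\cite{DBLP:journals/tches/MaHWZW24}, preserving the Boolean decryption result; and circuit bootstrapping in \cref{algorithm:protocol:ckks_to_tfhe:circuit_bootstrapping} lifts the LWE bit to $\signTFHE$ without altering it. Composing these equivalences with the sign-preservation of \cref{algorithm:protocol:ckks_to_tfhe:amplify} yields the lemma.

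The main obstacle I anticipate is the boundary analysis in the scaling step: the signed encoding is fragile exactly at the flip points ($0$, $\lfloor \quotientCKKS/2-1\rfloor$, and their neighbours), so I must argue that the chosen factor keeps the scaled value a safe distance from them, guaranteeing both that no overflow corrupts the MSB and that applying homomorphic decomposition with switching size $k \in \Zp$ only to the higher blocks remains sound. Pinning down that the floor in the constant and the use of $\lfloor \quotientCKKS/2-1\rfloor$ (rather than $\quotientCKKS/2$) leave enough margin is the delicate part; under the always-successful-decryption assumption, everything downstream of \cref{algorithm:protocol:ckks_to_tfhe:amplify} is then a routine chaining of the known per-operation guarantees.
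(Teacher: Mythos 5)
Your proposal is correct and takes essentially the same route as the paper: the paper justifies this lemma precisely by the discussion in \cref{section:protocol:ckks_to_tfhe}, namely that the scaling in \cref{algorithm:protocol:ckks_to_tfhe:amplify} consumes the hypothesis $|\decCKKS(\diffCKKS)| < \diffMax$ to keep the sign-preserved plaintext inside the non-wrapping range of the signed encoding, after which the modulus switch, sample extraction, homomorphic decomposition, and circuit bootstrapping each preserve the sign/Boolean decryption result under the standing assumption that decryption always succeeds. Your line-by-line invariant chaining is simply a more explicit formalization of that same argument.
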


 The correctness of \cref{algorithm:protocol:outline} is immediate from \cref{lemma:ckks_to_tfhe:correctness} and the correctness of \REVERSE{} and \BLOCK{} in~\cite{DBLP:conf/cav/BannoMMBWS22}.

 \begin{theorem}
 [correctness]
 Given a safety STL formula $\fml$, $\BlockSize \in \Zp \cup \{*\}$, and RLWE ciphertexts $\encSignalSequence$ encrypting a signal $\signal \in {(\R^{\Var})}^{*}$,
 if $|\fun(\signal_i) - \constant| < \diffMax$ holds for any $\fun \geq \constant \in \Pred(\fml)$ and $i \in \{1, 2,\dots, \signalLength\}$,
 \cref{algorithm:protocol:outline} returns LWE ciphertexts $\encResultSequence$ such that for each $i \in \{1,2,\dots, n\}$, 
 \begin{ienumeration}
  \item if $\BlockSize = *$, $\decTFHE(\encResult_i) = \top$ if and only if $\prefix{\signal}{i}$ is bad for $\fml$ and
  \item otherwise, $\decTFHE(\encResult_{i}) = \top$ if and only if $\prefix{\signal}{\lfloor i / \BlockSize \rfloor \BlockSize}$ is bad for $\fml$.
 \end{ienumeration}
 \end{theorem}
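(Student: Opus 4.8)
The plan is to prove the theorem by composing three ingredients that are already in place: the per-predicate correctness of scheme switching (\cref{lemma:ckks_to_tfhe:correctness}), the language characterization of the DFA $\Monitor_{\fml}$ constructed at \cref{algorithm:protocol:outline:dfa-construction}, and the correctness of the oblivious DFA-execution algorithms \REVERSE{} and \BLOCK{} established in~\cite{DBLP:conf/cav/BannoMMBWS22}. As stated in the surrounding discussion, I work under the standing assumption that every ciphertext decrypts correctly, so that the CKKS and TFHE decryptions return their intended values.

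First I would fix a signal position $i$ and a predicate $\fun \geq \constant \in \Pred(\fml)$. At \cref{algorithm:protocol:outline:CKKS-diff} the ciphertext $\diffCKKS$ encrypts $\fun(\signal_i) - \constant$, hence $\decCKKS(\diffCKKS) = \fun(\signal_i) - \constant$ under correct decryption. The hypothesis $|\fun(\signal_i) - \constant| < \diffMax$ is then exactly the precondition $|\decCKKS(\diffCKKS)| < \diffMax$ required by \cref{lemma:ckks_to_tfhe:correctness}. Applying that lemma to the \texttt{isPositive} call at \cref{algorithm:protocol:outline:ckks-to-tfhe} (which runs \cref{algorithm:protocol:ckks_to_tfhe}) yields an RGSW ciphertext $\signTFHE$ with $\decTFHE(\signTFHE) = \top \iff \decCKKS(\diffCKKS) \geq 0 \iff \fun(\signal_i) \geq \constant$. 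Ranging over all positions $i \in \{1,\dots,\signalLength\}$ and all predicates, this shows that the stream of RGSW ciphertexts fed to $\mathsf{DFAExec}$ decrypts precisely to $\BitEncoding(\signal)$.

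Next I would invoke the construction at \cref{algorithm:protocol:outline:dfa-construction}, which guarantees $\Lg(\Monitor_{\fml}) = \{\BitEncoding(\signal') \mid \signal' \text{ is bad for } \fml\}$, together with the correctness of the execution algorithms: the LWE ciphertext returned by the \texttt{isAccepting} query decrypts to $\top$ iff $\Monitor_{\fml}$ accepts the bit-encoding of the prefix consumed so far. For $\BlockSize = {*}$ the algorithm \REVERSE{} refreshes its answer at every step, so $\decTFHE(\encResult_i) = \top \iff \BitEncoding(\prefix{\signal}{i}) \in \Lg(\Monitor_{\fml}) \iff \prefix{\signal}{i}$ is bad for $\fml$, which is item i). For a finite $\BlockSize$, \BLOCK{} only advances its internal state at block boundaries, so the value read at step $i$ reflects the most recently completed block, i.e.\ the prefix of length $\lfloor i/\BlockSize \rfloor \BlockSize$, giving item ii).

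The two compositions above are routine; the step I expect to demand the most care is the bookkeeping at the interface between the two encodings. Concretely, I would need to verify that the order in which the $|\Pred(\fml)|$ predicates are fed per signal valuation matches the per-symbol expansion of $\powerset{\Pred(\fml)}$ into $\{\top,\bot\}$ used when building $\Monitor_{\fml}$, and that the bit-level refresh index of \BLOCK{} translates correctly into the signal-level prefix length $\lfloor i/\BlockSize \rfloor \BlockSize$ stated in the theorem (including the $0$-based versus $1$-based index shift between the loop and the conclusion). Once this alignment is checked, the theorem follows immediately from the chain $\signal \text{ bad} \iff \BitEncoding(\signal) \in \Lg(\Monitor_{\fml}) \iff \decTFHE(\encResult_i) = \top$.
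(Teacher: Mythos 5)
Your proposal is correct and follows essentially the same route as the paper's own (very brief) proof sketch, which likewise combines the language characterization $\Lg(\Monitor_{\fml}) = \{\BitEncoding(\signal) \mid \signal \text{ bad for } \fml\}$, the correctness of \REVERSE{} and \BLOCK{} from~\cite[Theorems 2 and 3]{DBLP:conf/cav/BannoMMBWS22}, and \cref{lemma:ckks_to_tfhe:correctness} for the per-predicate scheme switching. Your additional bookkeeping about the predicate-feeding order and block-boundary indexing is a finer level of detail than the paper bothers to spell out, but it is the same argument.
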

 \begin{proof}[sketch]
  Since we have $\BitEncoding(\signal) \in \Lg(\Monitor_{\fml})$ if and only if $\signal$ is bad for $\fml$, from~\cite[Theorems 2 and 3]{DBLP:conf/cav/BannoMMBWS22}, it suffices to show $\fun(\prefix{\signal}{i}) \geq \constant \iff \decTFHE(\signTFHE) = \top$ in the loop from \cref{algorithm:protocol:outline:pred-loop-begin}.
  This holds by \cref{lemma:ckks_to_tfhe:correctness}.
  \qed{}
 \end{proof}
 \begin{table}[tb]
 \centering
 \caption{Complexity of \cref{algorithm:protocol:outline} for the size $|\fml|$ of the STL formula, the length $n$ of the monitored signal.}%
 \label{table:protocol:complexity}
 \scriptsize
 \begin{tabular}{ccccc}
  \toprule
  \multirow{2}{*}{Algorithm for DFA Execution} & \multicolumn{2}{c}{Number of Applications} & \multirow{2}{*}{Scheme Switching}\\ %
    & TFHE & CKKS & \\\midrule
  \REVERSE{} & $O(n 2^{|\fml|})$     & $O(n |\fml|)$ & $O(n |\fml|)$\\ %
  \BLOCK{}   & $O(n 2^{2^{|\fml|}})$ & $O(n |\fml|)$ & $O(n |\fml|)$\\ %
  \bottomrule
 \end{tabular}
 \end{table}

 \cref{table:protocol:complexity} summarizes the complexity of \cref{algorithm:protocol:outline}.
 The number of applications of TFHE operations is immediately obtained from the discussion in~\cite{DBLP:conf/cav/BannoMMBWS22}.
 For each input ciphertext $\encSignal_i$,
 the number of applications of CKKS operations is the total number of arithmetic operators in $\Pred(\fml)$, and
 the number of scheme switching is $|\Pred(\fml)|$, both are bounded by $|\fml|$.
 \MW{To claim that the space complexity is the same as~\cite{DBLP:conf/cav/BannoMMBWS22}, we need to discuss the space complexity due to CKKS. I do not think we have enough space and time for it.}
 Overall, \cref{algorithm:protocol:outline} has linear time complexity %
 with respect to $n$, suggesting the scalability of our algorithm.
\end{LongVersionBlock}

\begin{NoSecurityVersion}
\subsection{Security of the two-party protocol for monitoring}\label{section:protocol:security_discussion}
 \reviewer{5}{I found the discussion on the security of the overall oblivious monitoring scheme a bit superfluous. It essentially throws it back to the
 IND-CPA-security argument. However, as the authors mention, CKKS introduces error with repeated application, there are a number of novel
 aspects introduced by switching between the schemes, and other algorithmic choices. It is not clear to me how this impacts the overall security of the
 scheme, and the small explanation from the authors seems inadequate.}
 Here, we briefly discuss the security of our two-party protocol for oblivious online STL monitoring.
 \LongVersion{
 \subsubsection{Threat model}
 }
 We employ the same threat model as~\cite{DBLP:conf/cav/BannoMMBWS22}:
 We regard the client as \emph{malicious}\LongVersion{, \ie{} it may deviate from the protocol;
 We regard}\ShortVersion{ and} the server as \emph{honest-but-curious}~\cite{DBLP:books/cu/Goldreich2004}, \ie{} it adheres to the protocol but may obtain the client's private data from the obtained information.
 \LongVersion{The public and private data are summarized in the beginning of \cref{section:protocol}.}

 \LongVersion{\subsubsection{Privacy of the client}}
 Through the protocol, the server obtains RLWE ciphertexts $\encSignalSequence$ and the scaling factor $\scale_i \in \R$ for each $i \in \{1,2,\dots,n\}$.
 Since both CKKS and TFHE schemes are IND-CPA-secure~\cite{DBLP:journals/iacr/CheonHK20,DBLP:journals/joc/ChillottiGGI20} and scheme switching is a special case of key switching,
 our protocol's
 IND-CPA security follows from~\cite[Section 4]{DBLP:conf/pkc/MicciancioV24} under the \emph{quadratic circular LWE assumption}, which is common in FHE schemes.
 Therefore, our protocol keeps the client's private data secret.
 \LongVersion{\subsubsection{Privacy of the server}}

 In our protocol, the client can arbitrarily feed input RLWE ciphertexts $\encSignalSequence$ and obtain the corresponding LWE ciphertexts $\encResultSequence$.
 From the decryption results of these ciphertexts, the client cannot exactly identify the specification $\fml$, in particular, the DFA $\Monitor_{\fml}$ and the predicates $\fun \geq \constant$ in $\Pred(\fml)$, because there are multiple (non-equivalent) DFAs or predicates consistent with the obtained information.
 Notice that they can be identified with additional information, for example, the size of $\Monitor_{\fml}$~\cite{DBLP:journals/iandc/Angluin81}.

 Nevertheless, some information might leak from the noise in each $\encResult_i$ because it can depend on the conducted FHE operations.
 To fully ensure the privacy of the server, the server must randomize the noise in the ciphertexts $\encResult_i$ without changing the decryption result, which is also done in~\cite{DBLP:conf/cav/BannoMMBWS22}.
 From discussion in~\cite{DBLP:conf/cav/BannoMMBWS22}, the server's private data are also kept secret, assuming the \emph{shielded randomness leakage (SRL)} security~\cite{DBLP:conf/icalp/BrakerskiDGM22,DBLP:conf/stoc/GayP21} over TFHE.\@

\end{NoSecurityVersion}

\section{Experimental Evaluation}\label{section:experiments}

We experimentally evaluated the practicality of our oblivious online STL monitoring protocol with our prototype toolkit \ourTool{}\footnote{\ourTool{} is publicly available at \url{https://github.com/MasWag/arith_homfa}.}.
We implemented \ourTool{} in C++20.
We use Microsoft SEAL~\cite{sealcrypto} and TFHEpp~\cite{DBLP:conf/uss/MatsuokaBMS021} as the libraries for CKKS and TFHE, respectively.
We used Spot~2.11.5~\cite{DBLP:conf/cav/Duret-LutzRCRAS22} to handle temporal logic formulas.

We aim to address the following research questions.
\begin{itemize}
 \item[RQ1] How fast is the proposed workflow? Is\LongVersion{ the throughput}\ShortVersion{ it} sufficient for practical usage?
 \item[RQ2] How does the optimization in \cref{section:protocol:ckks_to_tfhe} improve the efficiency?
 \item[RQ3] Does the use of CKKS improve the workflow's efficiency, for the benchmarks tractable by the approach in~\cite{DBLP:conf/cav/BannoMMBWS22}, which only uses\LongVersion{ the} TFHE\LongVersion{ scheme}?
 \item[RQ4] Is the computational demand of the client low enough to be executed on a standard IoT device?
\end{itemize}

\begin{table}[t]
 \centering
 \caption{The STL formulas used in the experiments.}%
 \label{table:specifications}
 \scriptsize
 \begin{tabular}{c c c c}
  \toprule
  $\BloodGlucose_1$ & $ \BoxOp{[100,700]} (\mathit{glucose} \geq 70)$ & & \\
  $\BloodGlucose_2$ & $ \BoxOp{[100,700]} (\mathit{glucose} < 350)$ & & \\
  $\BloodGlucose_4$ & $ \BoxOp{[600,700]} (\mathit{glucose} < 200)$ & & \\
  $\BloodGlucose_5$ & $ \neg\DiaOp{[200,600]}\BoxOp{[0,180]} (\mathit{glucose} \geq 240)$ & & \\
  $\BloodGlucose_6$ & $ \neg\DiaOp{[200,600]}\BoxOp{[0,180]} (\mathit{glucose} < 70)$ & & \\
  \bottomrule
 \end{tabular}
 \hfil
 \begin{tabular}{c c c c}
 \toprule
  $\BloodGlucose_7$ & $ \BoxOp{} (\mathit{glucose} \geq 70 \land \mathit{glucose} < 180)$ & & \\
  $\BloodGlucose_8$ & $ \BoxOp{} (\Delta \mathit{glucose} \geq -5 \land \Delta \mathit{glucose} < 3)$ & & \\
  $\BloodGlucose_{10}$ & $ \BoxOp{} (\mathit{glucose} < 60 \imply \DiaOp{[0,25]} \mathit{glucose} \geq 60)$ & & \\
  $\BloodGlucose_{11}$ & $ \BoxOp{} (\mathit{glucose} > 200 \imply \DiaOp{[0,25]} \mathit{glucose} < 200)$ & & \\
  \midrule
  $\RSS$ & $\BoxOp{} (\Safe \land \NextOp{} \lnot \Safe \imply \NextOp{} (\fmlPreBr \land \fmlBrake))$ \\
  \bottomrule
 \end{tabular}
\end{table}
\begin{figure}[t]
 \centering
 \begin{gather*}
  \fmlPreBr \equiv \Safe \TRelease{[0, \reactionDelayInStep)} (\AEgoMaxAcc \land \APrecMaxBr) \qquad
  \fmlBrake \equiv \Safe \TRelease{[\reactionDelayInStep, +\infty)} (\AEgoMinBr \land \APrecMaxBr)\\
  \Safe \equiv (\xOf{\egoIndex} - \xOf{\precIndex} < w \land \xOf{\precIndex} - \xOf{\egoIndex}< w) \imply \yOf{\egoIndex} < \yOf{\precIndex} \land \yOf{\egoIndex} < \yOf{\precIndex} - \dRSS\\
   \AEgoMaxAcc \equiv \ayOf{\egoIndex} < \ayMaxAcc \quad
   \AEgoMinBr \equiv \ayOf{\egoIndex} < -\ayMinBr\quad
  \APrecMaxBr \equiv \ayOf{\precIndex} \geq -\ayMaxBr\\
  \dRSS = \dEgoPreBr + \dEgoBrake - \dPrecBrake \qquad
  \dEgoPreBr = \vyOf{\egoIndex} \reactionDelay + 0.5 \ayMaxAcc {\reactionDelay}^{2}\\
  \dEgoBrake = {(\vyOf{\egoIndex} + \reactionDelay \ayMaxAcc)}^{2}/({2 \ayMinBr}) \qquad
  \dPrecBrake = \vyOf{\precIndex}^2/(2 \ayMinBr)
 \end{gather*}
 \caption{The auxiliary predicates and formulas in \RSS{}. The constants are shown in\LongVersion{ \cref{section:detail_rss_formula}}\ShortVersion{ Appendix B.2 of~\cite{DBLP:journals/corr/abs-2405-16767}}.}%
 \label{figure:auxiliary_RSS}
\end{figure}
\begin{table}[tp]
 \captionof{table}{Summary of the benchmarks. The columns $|\Loc|$ and $|\reverse{\Loc}|$ show the number of states of $\Monitor_{\fml}$ and $\reverse{\Monitor}_{\fml}$, respectively. The columns $|\word|$ show the length of the monitored log. Cells labeled \TIMEOUT{} indicate DFA construction was halted due to memory exhaustion.}%
 \label{table:benchmark_summary}
 \scriptsize
 \begin{tabular}{lrrrrrcccccccc}
  \toprule
  & \multicolumn{2}{c}{\ourTool{}} & \multicolumn{2}{c}{\homfa{}} &  \multirow{2}{*}{$|\word|$}\\
  & $|\Loc|$ & $|\reverse{\Loc}|$ & $|\Loc|$ & $|\reverse{\Loc}|$ &  &\\
  \midrule
  $\BloodGlucose_{1}$ & 703 & 172,402 & 10,524 & \TIMEOUT{} & 721 \\
  $\BloodGlucose_{2}$ & 703 & 172,402 & 11,126 & \TIMEOUT{} & 721 \\
  $\BloodGlucose_{4}$ & 703 & \TIMEOUT{} & 7026 & \TIMEOUT{} & 721 \\
  $\BloodGlucose_{5}$ & 72,603 & \TIMEOUT{} & \TIMEOUT{} & \TIMEOUT{} & 721 \\
  $\BloodGlucose_{6}$ & 72,603 & \TIMEOUT{} & \TIMEOUT{} & \TIMEOUT{} & 721 \\
  \bottomrule
 \end{tabular}
 \hfill
 \begin{tabular}{lrrrrrcccccccc}
  \toprule
  & \multicolumn{2}{c}{\ourTool{}} & \multicolumn{2}{c}{\homfa{}} &  \multirow{2}{*}{$|\word|$}\\
  & $|\Loc|$ & $|\reverse{\Loc}|$ & $|\Loc|$ & $|\reverse{\Loc}|$ &  &\\
  \midrule
  $\BloodGlucose_{7}$ & 3 & 3 & 21 & 20 & 10,081 \\
  $\BloodGlucose_{8}$ & 5 & 5 & --- & --- & 10,081 \\
  $\BloodGlucose_{10}$ & 27 & 27 & 237 & 237 & 10,081 \\
  $\BloodGlucose_{11}$ & 27 & 27 & 390 & 390 & 10,081 \\
  \midrule
  \RSS{} & 179 & 218 & --- & --- & 49 \\
  \bottomrule
 \end{tabular}
\end{table}
\subsection{Benchmarks}\label{section:benchmarks}
For the experiments addressing RQ1--3,
we used two benchmarks, \BloodGlucose{} and \RSS{} from practical applications.
\cref{table:specifications} shows the STL formulas we used, where the auxiliary predicates and formulas in \RSS{} are shown in \cref{figure:auxiliary_RSS}.
\cref{table:benchmark_summary} shows the size of the DFAs and the length of the monitored signals.

\BloodGlucose{} is taken from~\cite{DBLP:conf/cav/BannoMMBWS22}.
It is a benchmark for monitoring of blood glucose levels of type 1 diabetes patients.
The monitored signals are one-dimensional with $\Var = \{ \mathit{glucose} \}$.
The term $\Delta \mathit{glucose}$ in $\BloodGlucose_8$ represents the difference between two consecutive blood glucose levels, computed by CKKS in our algorithm.
We used simglucose~\cite{simglucose} to generate the monitored signals.
STL formulas $\BloodGlucose_1$--$\BloodGlucose_6$ originate from~\cite{DBLP:conf/rv/CameronFMS15}.
STL formulas $\BloodGlucose_7$--$\BloodGlucose_{12}$ originate from~\cite{DBLP:conf/iotdi/YoungCGPF18}.

\RSS{} is a benchmark for monitoring driving behaviors of vehicles against the \emph{Responsibility-Sensitive Safety (RSS)} model~\cite{DBLP:journals/corr/abs-1708-06374}.
The monitored signal comprises eight dimensions: lateral position, longitudinal position, longitudinal velocity, and longitudinal acceleration of both the ego and the preceding vehicle, represented as $\Var = \{ \xOf{i}, \yOf{i}, \vyOf{i}, \ayOf{i} \mid i \in \{\mathrm{ego}, \mathrm{prec}\} \}$.
We generated the monitored signal using an unpublished 2D driving simulator\LongVersion{ to visualize driving scenarios}.
We used a variant of an STL formula taken from~\cite{DBLP:conf/memocode/HekmatnejadYDAS19} encoding the RSS rule.

\subsection{Experiments}\label{section:experiment_setting}

To answer RQ1--3, we measured the time required for monitoring encrypted logs using \BloodGlucose{} and \RSS{}.
For RQ2, we compared \ourTool{} with and without the optimization (\Optimized{} and \Naive{}) in \cref{section:protocol:ckks_to_tfhe}.
For RQ3, we compared \Optimized{} with \homfa{}~\cite{DBLP:conf/cav/BannoMMBWS22}, a tool for online oblivious \emph{LTL} monitoring only with TFHE.\@
To answer RQ4, we measured the time\LongVersion{ taken} to
\begin{ienumeration}
 \item encrypt random numbers to RLWE ciphertexts with either a public or private key within\LongVersion{ the} CKKS\LongVersion{ scheme} and
 \item decrypt random LWE ciphertexts within\LongVersion{ the} TFHE\LongVersion{ scheme}.
\end{ienumeration}
For the experiments addressing RQ4, we used randomly generated values because the values do not affect the time for encryption and decryption.

For the experiments addressing RQ1--3,
we ran three approaches (\Optimized{}, \Naive{}, and \homfa{}) using two DFA execution algorithms  (\REVERSE{} and \BLOCK{}) proposed in~\cite{DBLP:conf/cav/BannoMMBWS22}.
Thus, we have six configurations in total.
We ran each configuration five times and took the mean runtime of these executions.

In \Naive{}, we apply $\FHomDecomp$ to 64bits, while in \Optimized{}, we apply $\FHomDecomp$ to the first 24bits.
For \ourTool{}, we used a block size (in \BLOCK{}) of $\BlockSize = 1$ and a bootstrapping interval (in \REVERSE{}) of $\BootInterval = 200$, \ie{}
\BLOCK{} returns the results for each input and
bootstrapping is performed every time 200 ciphertexts are processed by \REVERSE{}.
For \homfa{}, we used the same parameters as in~\cite{DBLP:conf/cav/BannoMMBWS22}\footnote{Our definition of block size $\BlockSize$ is slightly different from~\cite{DBLP:conf/cav/BannoMMBWS22}. In their definition, the block size is $9$ for \homfa{}, which is the same as the experiments in~\cite{DBLP:conf/cav/BannoMMBWS22}}:
$\BlockSize = 1$ and $\BootInterval = 30000$.
\homfa{} is built with Spot~2.9.7~\cite{DBLP:conf/cav/Duret-LutzRCRAS22}, which is also the same as~\cite{DBLP:conf/cav/BannoMMBWS22}.
\ourTool{} requires more frequent bootstrapping because the ciphertext obtained by scheme switching has larger noise.
The choice of $\BlockSize = 1$ is for the consistency with~\cite{DBLP:conf/cav/BannoMMBWS22}.
We used the default security parameters of SEAL and TFHEpp, which satisfy 128-bit security.
See\LongVersion{ \cref{section:parameters}}\ShortVersion{ Appendix A of~\cite{DBLP:journals/corr/abs-2405-16767}} for the concrete parameters.

We conducted the experiments for RQ1--3 on an AWS EC2 c6i.4xlarge instance (16 vCPU, 32GB RAM) running Ubuntu 22.04.
For the experiments for RQ4, we used two single-board computers (SBCs): one with and one without an Advanced Encryption Standard (AES)~\cite{daemen1999aes} hardware accelerator.
Specifically, we used the Raspberry Pi 4 model B (ARM Cortex-A72 \emph{without} a hardware AES accelerator) with 4 GiB RAM, running Ubuntu 23.04, and the NanoPi R6S with a Rockchip RK3588S (ARM Cortex-A76 and Cortex-A55 \emph{with} a hardware AES accelerator) and 8 GiB RAM, running Ubuntu 22.04.2 LTS.\@

In \homfa{}, each blood glucose level was encoded using nine Boolean values via bit encoding, which is the same as~\cite{DBLP:conf/cav/BannoMMBWS22}.
For $\BloodGlucose_8$ and $\RSS$, we could not run \homfa{} because the STL formulas include arithmetic operations\LongVersion{ in the predicates}.

\begin{table*}[tb]
 \centering
 \caption{Summary of the runtime of \ourTool{} with and without the optimization in \cref{section:protocol:ckks_to_tfhe}. The meaning of \TIMEOUT{} is the same as \cref{table:benchmark_summary}. For each STL formula, the fastest configuration is highlighted.}%
 \label{table:result_summary}
 \scriptsize
 \begin{tabular}{lrrrrrrrr}
  \toprule
  & \multicolumn{4}{c}{Runtime (sec.)} & \multicolumn{4}{c}{Runtime/valuation (sec.)} \\
  & \multicolumn{2}{c}{\Optimized{}} & \multicolumn{2}{c}{\Naive{}} & \multicolumn{2}{c}{\Optimized{}} & \multicolumn{2}{c}{\Naive{}} \\
  & \BLOCK{} & \REVERSE{} & \BLOCK{} & \REVERSE{} & \BLOCK{} & \REVERSE{} & \BLOCK{} & \REVERSE{} \\
  \midrule
  $\BloodGlucose_{1}$ & \tbcolor{} 2.33e+02 & 4.75e+02 & 3.60e+02 & 6.13e+02 & \tbcolor{} 3.23e-01 & 6.59e-01 & 5.00e-01 & 8.50e-01 \\
  $\BloodGlucose_{2}$ & \tbcolor{} 2.29e+02 & 4.76e+02 & 3.73e+02 & 6.18e+02 & \tbcolor{} 3.18e-01 & 6.60e-01 & 5.17e-01 & 8.58e-01 \\
  $\BloodGlucose_{4}$ & \tbcolor{} 1.83e+02 & \TIMEOUT{} & 3.28e+02 & \TIMEOUT{} & \tbcolor{} 2.54e-01 & \TIMEOUT{} & 4.55e-01 & \TIMEOUT{} \\
  $\BloodGlucose_{5}$ & \tbcolor{} 3.30e+02 & \TIMEOUT{} & 4.59e+02 & \TIMEOUT{} & \tbcolor{} 4.58e-01 & \TIMEOUT{} & 6.36e-01 & \TIMEOUT{} \\
  $\BloodGlucose_{6}$ & \tbcolor{} 3.74e+02 & \TIMEOUT{} & 4.68e+02 & \TIMEOUT{} & \tbcolor{} 5.19e-01 & \TIMEOUT{} & 6.50e-01 & \TIMEOUT{} \\
  \midrule
  $\BloodGlucose_{7}$ & 3.96e+03 & \tbcolor{} 2.93e+03 & 5.90e+03 & 5.06e+03 & 3.93e-01 & \tbcolor{} 2.90e-01 & 5.85e-01 & 5.01e-01 \\
  $\BloodGlucose_{8}$ & 3.87e+03 & \tbcolor{} 3.03e+03 & 6.05e+03 & 5.09e+03 & 3.84e-01 & \tbcolor{} 3.00e-01 & 6.00e-01 & 5.05e-01 \\
  $\BloodGlucose_{10}$ & 3.66e+03 & \tbcolor{} 2.50e+03 & 5.43e+03 & 4.26e+03 & 3.63e-01 & \tbcolor{} 2.48e-01 & 5.39e-01 & 4.23e-01 \\
  $\BloodGlucose_{11}$ & 3.49e+03 & \tbcolor{} 2.65e+03 & 5.40e+03 & 4.23e+03 & 3.46e-01 & \tbcolor{} 2.62e-01 & 5.36e-01 & 4.19e-01 \\
  \midrule
  \RSS{} & 2.79e+01 & \tbcolor{} 2.50e+01 & 4.62e+01 & 3.87e+01 & 5.69e-01 & \tbcolor{} 5.11e-01 & 9.43e-01 & 7.89e-01 \\
  \bottomrule
 \end{tabular}
\end{table*}
\begin{table*}[tbp]
 \centering
 \caption{Summary of the runtime of \homfa{}. The meaning of \TIMEOUT{} is the same as \cref{table:benchmark_summary}. The STL formulas \homfa{} could not handle are omitted.}%
 \label{table:result_homfa_summary}
 \scriptsize
 \begin{tabular}{lrrrr}
  \toprule
  & \multicolumn{2}{c}{Runtime (sec.)} & \multicolumn{2}{c}{Runtime/val. (sec.)} \\
  & \BLOCK{} & \REVERSE{} & \BLOCK{} & \REVERSE{} \\
  \midrule
  $\BloodGlucose_{1}$ & 7.33e+01 & \TIMEOUT{} & 1.02e-01 & \TIMEOUT{} \\
  $\BloodGlucose_{2}$ & 7.36e+01 & \TIMEOUT{} & 1.02e-01 & \TIMEOUT{} \\
  $\BloodGlucose_{4}$ & 2.05e+01 & \TIMEOUT{} & 2.84e-02 & \TIMEOUT{} \\
  \bottomrule
 \end{tabular}
 \hfill
 \begin{tabular}{lrrrr}
  \toprule
  & \multicolumn{2}{c}{Runtime (sec.)} & \multicolumn{2}{c}{Runtime/val. (sec.)} \\
  & \BLOCK{} & \REVERSE{} & \BLOCK{} & \REVERSE{} \\
  \midrule
  $\BloodGlucose_{7}$ & 9.58e+02 & 8.83e+00 & 9.50e-02 & 8.76e-04 \\
  $\BloodGlucose_{10}$ & 1.12e+03 & 5.59e+01 & 1.11e-01 & 5.54e-03 \\
  $\BloodGlucose_{11}$ & 1.15e+03 & 8.91e+01 & 1.14e-01 & 8.84e-03 \\
  \bottomrule
 \end{tabular}
\end{table*}

\cref{table:result_summary,table:result_homfa_summary} show the mean runtime of \ourTool{} and \homfa{}.
\cref{table:enc_dec_result} shows the mean runtime to encrypt and decrypt ciphertexts on SBCs.

\subsection{RQ1: Performance on practical benchmarks}\label{section:practicality}

In the ``\Optimized{}'' block of ``Runtime per valuation (sec.)'' column of \cref{table:result_summary},
we observe that for any formula in \BloodGlucose{}, and for any algorithm,
the mean runtime per signal valuation is at most 700 milliseconds when DFA construction was successful and the optimization is used. %
Furthermore, for any formula in \BloodGlucose{},
the mean runtime per signal valuation is less than 500 milliseconds if an appropriate algorithm is chosen.
This is much faster than the typical sampling intervals for blood glucose levels, \eg{} 5 minutes for Dexcom G6~\cite{klyve2023algorithm}.

In the ``\Optimized{}'' block of ``Runtime per valuation (sec.)'' column of \cref{table:result_summary},
we observe that for \RSS{},
the mean runtime per signal valuation is less than 550 milliseconds for \REVERSE{} and less than 650 milliseconds for \BLOCK{} if the optimization is used.
Since this closely aligns with the reaction time of human drivers~\cite{10.5555/1295645.1295744},
a delay of around 550 milliseconds is likely acceptable for driver alert systems by sending precautions.
Overall, we answer RQ1 as follows.

\rqanswer{RQ1}{\ourTool{} can monitor each signal valuation in less than 550 milliseconds if an appropriate method is used. This throughput is sufficient for monitoring blood glucose levels. Additionally, it is likely fast enough for monitoring driving behaviors as part of a driver alert system.}

\subsection{RQ2: Acceleration by the optimized scheme switching}\label{section:domain_knowledge}

In the ``Runtime per valuation (sec.)'' column of \cref{table:result_summary},
we observe that the performance improvement (per signal valuation) by our optimization in \cref{section:protocol:ckks_to_tfhe} was about 150 milliseconds for \BloodGlucose{} and about 250 milliseconds for \RSS{}.
The runtime of \Optimized{} was about $70\%$ of that of \Naive{}.
This large reduction is because scheme switching is the dominant bottleneck among the overall process.
Overall, we answer RQ2 as follows.

\rqanswer{RQ2}{The optimization we introduced in \cref{section:protocol:ckks_to_tfhe} reduces the overall execution time about $30\%$ from the naive scheme switching.}

\subsection{RQ3: Comparison with purely TFHE-based approach}\label{section:comparison}

In the ``Runtime (sec.)'' columns of \cref{table:result_summary,table:result_homfa_summary},
we observe that for benchmarks where both \ourTool{} and \homfa{} work (\eg{} $\BloodGlucose_1$, $\BloodGlucose_2$, $\BloodGlucose_4$, $\BloodGlucose_7$, $\BloodGlucose_{10}$, and $\BloodGlucose_{11}$ for \BLOCK{}), \homfa{} is more efficient due to the computational cost of scheme switching.
In contrast, we observe that \homfa{} cannot handle some STL formulas due to excessive memory consumption (\eg{} $\BloodGlucose_5$ and $\BloodGlucose_6$ for \BLOCK{}).
This is due to the difference in the DFA size representing the same specification (\cref{table:benchmark_summary}).
In \homfa{}, each signal valuation is encoded by nine ciphertexts via bit encoding.
In \ourTool{}, we have one ciphertext for each atomic proposition, which is at most two in \BloodGlucose{}.
This distinction results in the DFAs for \ourTool{} being much smaller than those for \homfa{}.
Overall, we answer RQ3 as follows.

\rqanswer{RQ3}{\homfa{} is faster than \ourTool{} for the specifications that both can handle. For specifications with a huge DFA encoding, only \ourTool{} works due to the differences in DFA construction.}

\subsection{RQ4: Computational demand of the client}\label{section:client_demand}
\begin{table*}[t]
 \centering
 \caption{Mean runtime to encrypt RLWE ciphertexts within\LongVersion{ the} CKKS\LongVersion{ scheme} and decrypt LWE ciphertexts within \LongVersion{the} TFHE\LongVersion{ scheme} on SBCs.}%
 \label{table:enc_dec_result}
 \scriptsize
 \begin{tabular}{c c c c}
  \toprule
  & Enc. w/ public key &  Enc. w/ private key &  Decryption \\
  & [ms/value] & [ms/value] & [ms/ciphertext] \\
  \midrule
  NanoPi R6S (\emph{w/} AES accelerator) & 6.82 & 2.21 & $1.17 \times 10^{-3}$ \\
  Raspberry Pi 4 (\emph{w/o} AES accelerator) & 12.7 & 4.44 & $1.72 \times 10^{-3}$ \\
  \bottomrule
 \end{tabular}
\end{table*}

\cref{table:enc_dec_result} summarizes the mean runtime of encryption and decryption on SBCs.
We observe that both the encryption and decryption processes consume considerably less time than the throughput of monitoring shown in the ``Runtime per valuation (sec.)''  column of \cref{table:result_summary}.
Therefore, we answer RQ4 as follows.

\rqanswer{RQ4}{The client's computational demand in the proposed protocol is sufficiently low for standard IoT devices.}

Furthermore, \cref{table:enc_dec_result} shows that the encryption and decryption processes are faster on the NanoPi R6S than on the Raspberry Pi 4.
This is likely due to the efficiency of the random number generation, which is enhanced by the hardware AES accelerator, as reported in~\cite{DBLP:conf/cav/BannoMMBWS22} for TFHE.\@

\begin{ShortVersionBlock}  
\section{Related works}\label{sec:related_work}
 \begin{table}[tbp]
 \centering
 \scriptsize
 \caption{Comparison of FHE-based monitoring methods.}%
 \label{table:related_methods}
 \begin{tabular}{c c c c}
  \toprule
  & Used FHE Scheme & Arithmetic operations & Specification is secret\\\midrule
  \textbf{Ours} & CKKS~\cite{DBLP:conf/asiacrypt/CheonKKS17} and TFHE~\cite{DBLP:journals/joc/ChillottiGGI20} & \goodFeature{Yes} & \goodFeature{Yes} \\
  \cite{DBLP:conf/cav/BannoMMBWS22}  & TFHE~\cite{DBLP:journals/joc/ChillottiGGI20} & \badFeature{No} & \goodFeature{Yes} \\ %
  \cite{DBLP:conf/trustbus/TriakosiaRTTSF22} & CKKS~\cite{DBLP:conf/asiacrypt/CheonKKS17} & \goodFeature{Yes} & \badFeature{No}\\ %
  \bottomrule
 \end{tabular}
 \end{table}

 \cref{table:related_methods} summarizes FHE-based monitoring methods.
 As previously mentioned, our method is based on~\cite{DBLP:conf/cav/BannoMMBWS22} and handles arithmetic operations by bridging the CKKS and TFHE schemes.
 Triakosia et al.~\cite{DBLP:conf/trustbus/TriakosiaRTTSF22} proposed a method for oblivious monitoring of manufacturing quality measures with CKKS.\@
 The approach in~\cite{DBLP:conf/trustbus/TriakosiaRTTSF22} is collaborative:
 the server conducts polynomial operations using CKKS, while
 the client conducts non-polynomial operations (\eg{} branching) without using FHE techniques.
 This is done by
 \begin{ienumeration}
 \item decrypting the ciphertexts sent from the server,
 \item conducting non-polynomial operations over plaintexts,
 \item encrypting the result, and
 \item sending it back to the server.
 \end{ienumeration}
 This collaborative approach inherently allows the client to access the monitored specification.
 In contrast, our monitoring algorithm runs entirely on the server, thereby ensuring the specification remains confidential and not exposed to the client.

\end{ShortVersionBlock}

\section{Conclusions and future work}\label{section:conclusion}

Combining two FHE schemes, we proposed a protocol for online oblivious monitoring of safety STL formulas with arithmetic operations.
We evaluated the proposed approach by monitoring blood glucose levels and vehicles' behavior\LongVersion{ against the RSS rules}.
The experimental results suggest the practical relevance of our protocol.

 Possible future directions include extending the protocol to handle more general signals, \eg{} mixed signals~\cite{DBLP:conf/formats/HavlicekLMN10}.
 \LongVersion{Our monitoring algorithm can be seen as an algorithm to run symbolic automata~\cite{DBLP:conf/cav/DAntoniV17} over real vectors and linear constraints.
 An extension for other predicates is also a future direction.}
 \LongVersion{Another direction is the extension for robust STL monitoring~\cite{FP09}.}

\subsubsection*{Acknowledgments}

This work is partially supported
    by
    JST PRESTO Grant No.~JPMJPR22CA,
    JSPS KAKENHI Grant No.~22K17873 and 23KJ1319, and
    JST CREST Grant No.~JPMJCR19K5, JPMJCR2012, and JPMJCR21M3.

\ifdefined\VersionLong%
\newcommand{\LNCS}{Lecture Notes in Computer Science}
\else
\newcommand{\LNCS}{LNCS}
\fi

\clearpage
\bibliography{dblp_refs}
\begin{LongVersionBlock}
\clearpage
\appendix

\begin{ShortVersionBlock}
\section{Technical Backgrounds}\label{sec:preliminaries}

We denote the reals, integers, naturals and positive naturals by $\R$, $\Z$, $\N$, and $\Zp$, respectively.
For $\real \in \R$, we let $\lfloor \real \rfloor \in \Z$ be the maximum integer satisfying $\lfloor \real \rfloor \leq \real$.
For a set $X$, we denote its powerset by $\powerset{X}$,
the set of infinite sequences of $X$ by $X^{\omega}$, and
the set of sequences of $X$ of length $n$ by $X^{n}$, where $n \in \N$.
We let $X^{*} = \bigcup_{n \in \N} X^{n}$ and $X^{\infty} = X^{*} \cup X^{\omega}$.
For $\word \in X^{*}$ and $\word' \in X^{\infty}$, we let $\word \cdot \word' \in X^{\infty}$ be their juxtaposition.
For $\word \in X^{*}$, we let $\prefix{\word}{n}$ be the prefix of $\word$ of length $n$ for $n \leq |\word|$,
where $|\word|$ is the length of $\word$.
We let $\emptyword$ be the empty sequence.
For a DFA $\A$, we denote its language by $\Lg(\A)$.

\subsection{Discrete-time signal temporal logic}\label{subsec:STL}
Let $\Var$ be the finite set of variables.
A (discrete-time) \emph{signal} $\signal \in {(\R^{\Var})}^{\infty}$ is a finite or infinite sequence of functions $\valuation_i\colon\Var \to \R$.
We\LongVersion{ also} call such $\valuation_i$ a (signal) valuation.

\emph{Signal temporal logic (STL)}~\cite{DBLP:conf/formats/MalerN04} is a widely used formalism to represent behaviors of signals.
We use its variant for discrete-time signals.

\begin{definition}
 [signal temporal logic]%
 \label{def:STL}
 For a finite set $\Var$ of variables, the syntax of \emph{signal temporal logic (STL)} is defined as follows, %
where\footnote{In our implementation and experiments, we extend $\fun$ to receive a bounded history of signal valuations, \ie{} $\fun\colon {(\R^{\Var})}^N \to \R$, where $N \in \Zp$ is the history bound.} $\fun\colon \R^{\Var} \to \R$, $\constant \in \R$, and $i,j \in \N \cup \{+\infty\}$ satisfying $i < j$.
\[
 \fml, \fml' \Coloneq \top \mid \fun \geq \constant \mid \neg \fml \mid \fml \lor \fml' \mid \NextOp \fml \mid \fml \UntilOp{[i,j)} \fml'
\]
For an STL formula $\fml$, we let $\Pred(\fml)$ be the set of inequalities $\fun \geq \constant$ in $\fml$.
We define the following as syntax sugar:
$\bot \equiv \neg \top$,
$\fun < \constant \equiv \neg (\fun \geq \constant)$,
$\fml \imply \fml' \equiv (\neg \fml) \lor \fml'$,
$\fml \land \fml' \equiv \neg (\neg \fml \lor \neg \fml')$,
$\DiaOp{[i,j)} \fml \equiv \top \UntilOp{[i,j)} \fml$,
$\BoxOp{[i,j)} \fml \equiv \neg \DiaOp{[i,j)} \neg \fml$,
$\fml \Release{[i,j)} \fml' \equiv \neg (\neg \fml \UntilOp{[i,j)} \neg \fml')$, and
$\fml \TRelease{[i,j)} \fml' \equiv \fml \Release{[i, j)} (\fml \lor \fml')$.
\end{definition}
For an STL formula $\fml$, an \emph{infinite} signal $\signalWithInfiniteInside$, and $k \in \N$,
the satisfaction relation $\satisfy{\signal}{k}{\fml}$ is inductively defined as follows.
\begin{gather*}
 \satisfy{\signal}{k}{\top} \qquad
 \satisfy{\signal}{k}{\fun \geq \constant} \iff \fun(\signal_k) \geq \constant \qquad
 \satisfy{\signal}{k}{\neg \fml} \iff (\signal, k) \not\models \fml \\
 \satisfy{\signal}{k}{\fml \lor \fml'} \iff \satisfy{\signal}{k}{\fml} \lor \satisfy{\signal}{k}{\fml'} \quad
 \satisfy{\signal}{k}{\NextOp{\fml}} \iff \satisfy{\signal}{k+1}{\fml}\\
 \begin{aligned}
  \satisfy{\signal}{k}{\fml \UntilOp{[i,j)} \fml'} &\iff
  \exists l \in \{k+i,k+i+1,\dots,k+j-1\}.\,\satisfy{\signal}{l}{\fml'} \\
  &\qquad\quad\land \forall m \in \{k,k+1,\dots,l-1\}.\, \satisfy{\signal}{m}{\fml}
 \end{aligned}
\end{gather*}

We let $\signal \models \fml$ if we have $\satisfy{\signal}{0}{\fml}$.
An STL formula $\fml$ is\LongVersion{ called} a \emph{safety} property if
for any infinite signal $\signal$
satisfying $\signal \not\models \fml$,
there is a finite prefix $\signal_f$ of $\signal$ such that
for any infinite signal $\signal'$, we have
$\signal_f \cdot \signal' \not\models \fml$.
Such a prefix is called \emph{bad}~\cite{DBLP:journals/fmsd/KupfermanV01}.
Since a discrete-time STL formula $\fml$ is easily representable by an LTL~\cite{DBLP:conf/focs/Pnueli77} formula with $\Pred(\fml)$ as atomic propositions,
violation of a safety STL formula $\fml$ can be monitored by a DFA $\Monitor_{\fml}$ over $\powerset{\Pred(\fml)}$.
Namely, the DFA $\Monitor_{\fml}$ takes a sequence $a_1, a_2, \dots a_n \in {(\powerset{\Pred(\fml)})}^*$
and decides if the corresponding prefix $\signal_f$ of the signal under monitoring is bad or not for $\fml$,
where $a_i = \{\fun \geq \constant \in \Pred(\fml) \mid \fun(\signal_i) \geq \constant\}$.
See, \eg{}~\cite{DBLP:journals/tosem/BauerLS11}, for a construction of such\LongVersion{ a DFA} $\Monitor_{\fml}$ (for LTL).

\subsection{Fully Homomorphic Encryption}\label{subsec:FHE}
\emph{Homomorphic encryption (HE)}~\cite{rivest1978data} is a kind of encryption that enables data evaluation without decryption, \ie{}
for a plaintext $\plain$ and a function $f$,
$f(\plain)$ is (nearly) equal to
$\dec(f_{\mathrm{HE}}(\enc(\plain)))$, where $\enc(x)$ and $\dec(y)$ are the encryption and decryption results
and $f_{\mathrm{HE}}$ is the HE counterpart of $f$.
\emph{Fully HE (FHE)}~\cite{DBLP:conf/stoc/Gentry09} is a kind of HE such that any function $f$ can be evaluated without decryption.

In FHEs, ciphertexts based on the \emph{learning with error (LWE)} problem~\cite{DBLP:journals/jacm/Regev09} and its ring variant \emph{(RLWE)}~\cite{DBLP:journals/jacm/LyubashevskyPR13} are widely used.
Let $n, \quotient \in \Zp$ be security parameters.
An LWE ciphertext $\cipher_{\quotient}$ represents a value in the quotient ring $\qRing$ of $\Z$ modulo $\quotient$.
An RLWE ciphertext $\RLWE{\cipher}_{\quotient, N}$ represents a polynomial of degree $N - 1$ with coefficients in $\qRing$.
We omit $\quotient$ and $N$ if they are clear from the context.
For both LWE and RLWE, a public key $\publicKey$ and a private key $\key$ can be used for encryption while decryption requires the private key.

For security purpose,
small \emph{noise} is added to LWE and RLWE ciphertexts,
and encryption and decryption slightly change the value, \ie{}
for a plaintext $\plain$, $\dec(\enc(\plain))$ is slightly different from $\plain$.
Each FHE scheme has its approach to handle the noise, which we will review later.
However, the noise increases over FHE operations, and eventually, the result of the decryption largely deviates from the expected value, which is considered as a failure.
\LongVersion{For example, for LWE ciphertexts, homomorphic addition
doubles the noise on average.}
The following two approaches are widely used to address this issue:
\begin{ienumeration}
 \item Assuming the number of applications of FHE operations, the noise is sampled so that the decryption is successful throughout the computation;
 \item The noise is reduced by a special FHE operation called \emph{bootstrapping}~\cite{DBLP:conf/stoc/Gentry09} before it becomes too large.
\end{ienumeration}
\begin{table}[tbp]
 \centering
 \scriptsize
 \caption{Comparison of the TFHE and CKKS schemes.}\label{table:TFHE_vs_CKKS}
 \begin{tabular}{lccc}
  \toprule
  & Supported Operations & Our usage & Noise reduction by bootstrapping \\
  \midrule
  CKKS & Polynomial operations & Arithmetic operations & Slow \\
  TFHE & Any (by LUTs) & Logical operations & Relatively fast \\
\bottomrule
\end{tabular}
\end{table}
\begin{table*}[tbp]
 \centering
 \scriptsize
 \caption{Summary of the ciphertexts used in our protocol.}\label{table:ciphertexts}
 \begin{tabular}{lcccccc}
  \toprule
  & Encoded values & Encoded values & Example usage in & Notation in \\
  & in CKKS & in TFHE & our protocol & this paper\\
  \midrule
  LWE & --- & Boolean value & Monitoring results & bold font, \eg{} $\cipher$ \\
  RLWE & (Approx.) Reals & Boolean array & Signal valuations & bold font + overline, \eg{} $\RLWE{\cipher}$ \\
  RGSW & --- & Boolean value & Truth values of predicates & bold font + tilde, \eg{} $\RGSW{\cipher}$ \\
  \bottomrule
 \end{tabular}
\end{table*}

Among various FHE schemes, \emph{Cheon-Kim-Kim-Song scheme (CKKS)}~\cite{DBLP:conf/asiacrypt/CheonKKS17} and \emph{FHE over the Torus (TFHE)}~\cite{DBLP:journals/joc/ChillottiGGI20} are two of the most widely used schemes.
\cref{table:TFHE_vs_CKKS} summarizes their comparison.
\cref{table:ciphertexts} summarizes the encoded value for each kind of ciphertexts in each scheme and our specific usage.

In CKKS, RLWE ciphertexts are usually used.
CKKS is typically used to approximately encode real values and apply polynomial operations, \eg{} addition, subtraction, and multiplication.
When representing a real value $\real \in \R$ by an RLWE ciphertext\footnote{For simplicity, we omit the embedding of multiple values as roots of a polynomial~\cite{DBLP:conf/asiacrypt/CheonKKS17} for vectorization, which we do not use in our implementation.},
we first approximate $\real$ by a multiplication of $\plain \in \qRing$ and a positive floating-point number $\scale$, where $\plain$ represents a bit-encoding of a signed integer.
Then, we use a pair $(\scale, \RLWE{\cipher}_{q, N})$ of $\scale$ and an RLWE ciphertext $\RLWE{\cipher}_{q, N}$ encrypting a polynomial $\RLWE{\plain}$ with constant term $\plain$ to represent $\real$.
Notice that $\scale$ is not encrypted because it is typically chosen independently of $\real$.
We usually omit $\scale$ and simply write $\RLWE{\cipher}_{q, N}$.
We let $\encCKKS$ and $\decCKKS$ be the encryption and decryption with the above encoding.
In CKKS, errors are simply ignored, assuming that they are much smaller than, \eg{} noise in sampling and numeric error.
While CKKS is efficient for polynomial operations, it does not support non-polynomial operations.
Bootstrapping in CKKS is known to be slow~\cite{DBLP:conf/eurocrypt/CheonHKKS18}, and thus, CKKS is typically employed in situations where an upper bound of the number of applications of FHE operations is known beforehand.
In TFHE, both LWE and RLWE ciphertexts are used.
TFHE supports a multiplexer operation: an FHE operation to select one of the values of the given ciphertexts according to the value of another ciphertext (called a control bit).
By combining multiplexers, one can implement a look-up table (LUT), which allows to encode any operations with TFHE.\@
In this paper, we use TFHE to encode Boolean values and logical operations, such as AND and OR.\@
Specifically, we use LWEs to represent Boolean values and RLWEs to represent Boolean arrays.
One typical encoding used for LWEs is such that $\plain \in \qRing$ represents $\top$ if and only if $\plain \in \{0, 1, \dots, \quotient / 2 - 1\}$.
The encoding for RLWEs is similar: we use coefficients as an array of values.
The result of the decryption (as a Boolean value) is successful if
the noise is small\LongVersion{ enough,} and the plaintext is in the expected range.
We let $\encTFHE$ and $\decTFHE$ be the encryption and decryption with the above encoding.
Although TFHE supports any operations, it is not as fast as CKKS for polynomial operations, \eg{} a single multiplication of two 16-bit integer values takes more than a few seconds~\cite{DBLP:journals/joc/ChillottiGGI20}.
Bootstrapping is relatively fast in TFHE, and\LongVersion{ thus,} TFHE is suitable for handling unbounded length of data, reducing the noise by bootstrapping.
Another type of ciphertexts called \emph{RGSW}~\cite{DBLP:conf/crypto/GentrySW13} are also used in TFHE.\@
An RGSW ciphertext is, roughly speaking, a collection of RLWE ciphertexts.
When conducting logical operations\LongVersion{ with TFHE}, an RGSW ciphertext represents a Boolean value that is used, \eg{} as the control bit of multiplexers.

\subsection{Online algorithm for oblivious DFA execution}\label{subsec:HomFA}
Banno et al.~\cite{DBLP:conf/cav/BannoMMBWS22} proposed two TFHE-based algorithms (\REVERSE{} and \BLOCK{}) to obliviously execute a DFA over $\{\top, \bot\}$.
They use these algorithms for \emph{online} LTL monitoring by first constructing a DFA $\A$ over $\{\top, \bot\}$ from an LTL formula $\fml$ over atomic propositions $\AP$.
The construction is by
\begin{ienumeration}
 \item making a DFA over $\powerset{\AP}$ from $\fml$, \eg{} with~\cite{DBLP:journals/tosem/BauerLS11}, and
 \item modifying its alphabet to $\{\top, \bot\}$ by encoding each $\action^{\AP} \in \powerset{\AP}$ by $\action_1, \action_2,\dots,\action_{|\AP|} \in {\{\top, \bot\}}^{|\AP|}$.
\end{ienumeration}

Given a sequence $\encWordSequence$ of RGSW ciphertexts representing the input word $\wordWithInside \in {\{\top, \bot\}}^*$,
their algorithms return a sequence of LWE ciphertexts representing if prefixes of the input word are accepted by $\A$.
Their algorithms incrementally process the input word and return prefixes of the result before obtaining the entire input.
Thus, they are suitable for online monitoring.

There algorithms are based on another TFHE-based algorithm in~\cite{DBLP:conf/cav/BannoMMBWS22} for \emph{offline} execution of DFAs.
The offline algorithm reads the input word $\word$ \emph{from the tail} to achieve DFA execution only using multiplexers, which is fast for TFHE.\@
In \REVERSE{}, the offline algorithm is applied to the reversed DFA $\reverse{\A}$, \ie{} the minimum DFA such that $\word \in \Lg(\A) \iff \reverse{\word} \in \Lg(\reverse{\A})$, where $\reverse{\word}$ is the reversed word of $\word$.
\REVERSE{} returns the result for each prefix $\prefix{\word}{i}$ of $\word$.

In \BLOCK{}, they use a variant of the offline algorithm that also returns an RLWE ciphertext representing the state after reading the given ciphertexts.
The input word $\word$ is split into a series of words $\word_1, \word_2, \dots, \word_N \in {({\{\top, \bot\}}^{\BlockSize})}^{*}$ of length $\BlockSize \in \Zp$,
each word $\word_j$ is fed to the modified offline algorithm from $j = 1$ to $j = N$,
and the result for each block is used to choose the state after reading $\word_1, \word_2, \dots, \word_j$.
Since \BLOCK{} feeds each $\word_j \in {\{\top, \bot\}}^{\BlockSize}$ to the offline algorithm,
it returns the result for prefixes $\prefix{\word}{i}$ such that $i = \BlockSize \times j$ for some $j \in \N$.

While \REVERSE{} has better scalability with respect to the size of the original LTL formula in terms of the worst case complexity,
\cite{DBLP:conf/cav/BannoMMBWS22} reports that their practical superiority is inconclusive:
\REVERSE{} is typically faster than \BLOCK{} but \REVERSE{} takes much longer time, \eg{} when the reversed DFA $\reverse{\A}$ is huge.

\end{ShortVersionBlock}

\begin{ShortVersionBlock}
\section{Scheme switching optimized with value range information}\label{appendix:section:protocol:ckks_to_tfhe}
 \SetKwFunction{FSampleExtract}{RLWEToLWE}
 \SetKwFunction{FCircuitBootstrapping}{LWEToRGSW}
 \SetKwFunction{FModSwitch}{modSwitch}
 \SetKwFunction{FPartialHomDecomp}{PartHomDecomp}
 \begin{algorithm}[tbp]
 \caption{Outline of \texttt{isPositive}.}%
 \label{algorithm:protocol:ckks_to_tfhe}
 \DontPrintSemicolon{}
 \scriptsize
 \newcommand{\myCommentFont}[1]{\texttt{\scriptsize{#1}}}
 \SetCommentSty{myCommentFont}
 \KwIn{An RLWE ciphertext $\diffCKKS$, its scaling factor $\scale \in \R$ for CKKS, the maximum value $\diffMax \in \Rp$ of $|\decCKKS(\diffCKKS)|$, the switching size $k \in \Zp$ in $\FHomDecomp$, and the modulus $\quotientCKKS$ and $\quotientTFHE$ of\LongVersion{ the quotient ring in} the input and output ciphertexts}
 \KwOut{An RGSW ciphertext $\signTFHE[\RGSW]$ such that if $\signTFHE[\RGSW]$ and $\diffCKKS$ are decrypted correctly, $\decTFHE(\signTFHE[\RGSW]) = \top \iff \decCKKS(\diffCKKS) \geq 0$ holds}
 $\diffScaledCKKS \gets \FEvalCKKS(\diffCKKS \times \frac{\lfloor \quotientCKKS / 2 - 1\rfloor}{\diffMax \times \scale})$\tcp*[r]{Scale the value space keeping the sign}\label{algorithm:protocol:ckks_to_tfhe:amplify}
 $\diffSwitched[\RLWE] \gets \FModSwitch(\diffScaledCKKS, \quotientCKKS \to \quotient_{\mathrm{tmp}})$\tcp*[r]{Switch to the intermediate modulus $\quotient_{\mathrm{tmp}}$}\label{algorithm:protocol:ckks_to_tfhe:rescaling}
 $\diffSwitched \gets \FSampleExtract(\diffSwitched[\RLWE])$\label{algorithm:protocol:ckks_to_tfhe:sample_extract}\;
 $\signTFHE[] \gets \FHomDecomp(\diffSwitched, k)$\tcp*[r]{Switch the security parameters\LongVersion{ for TFHE} preserving the MSB}\label{algorithm:protocol:ckks_to_tfhe:key_switching}
 $\signTFHE \gets \FCircuitBootstrapping(\signTFHE[])$\label{algorithm:protocol:ckks_to_tfhe:circuit_bootstrapping}\;
 \end{algorithm}

 \MW{To Kotaro: Please read this section}
 \LongVersion{Among the steps in \cref{algorithm:protocol:outline},
 the most non-trivial and computationally demanding one is \texttt{isPositive} at \cref{algorithm:protocol:outline:ckks-to-tfhe},
 the construction of an RGSW ciphertext in TFHE.\@
 We elaborate it here.}

 As we explained in \cref{subsec:FHE},
 when we encrypt $\real \in \R$ within CKKS,
 we use $\plain \in \qRing$ satisfying %
 $\real \geq 0$ if and only if $\plain \in \{0, 1, \dots, \lfloor \quotient / 2 - 1\rfloor\}$.
 Since this coincides with the encoding of $\top$ within TFHE if security parameters are appropriately switched,
 we can obtain an RGSW ciphertext $\signTFHE$ satisfying $\decTFHE(\signTFHE) = \top \iff \fun(\signal_i) \geq \constant$ by
 converting the given RLWE ciphertext to an RGSW ciphertext with appropriate switching of security parameters.
 In this construction, switching of security parameters is the most computationally demanding
 even if we use a recent FHE operation called \emph{homomorphic decomposition} ($\FHomDecomp$)~\cite{DBLP:journals/tches/MaHWZW24}, which is more efficient than the operation used in~\cite{DBLP:conf/ccs/00010PMZJG23}.
 Thus, we optimize this step to improve the efficiency of the overall algorithm.
 Intuitively, $\FHomDecomp$ slices the binary representation of an integer into multiple short bit sequences.
 Given an LWE ciphertext $\cipher_{\quotient}$ with modulus $\quotient$ and $W, k \in \Zp$,
 $\FHomDecomp$ decomposes the first $W \times k$ bits of $\cipher_{\quotient}$ into LWE ciphertexts $\cipher_{\quotient', 1}, \cipher_{\quotient', 2}, \dots, \cipher_{\quotient', k}$ with modulus $\quotient'$ such that
 the first $W$ bits of each $\cipher_{\quotient', i}$ represents the ``$i$-th block'' of $\cipher_{\quotient}$, where each block consists of $W$ bits.
 The decomposed ciphertexts may have different security parameters from $\cipher_{\quotient}$.

 Although the result of decryption as a Boolean value solely depends on the ``most significant bit (MSB)'' of $\plain$,
 naively, we need to decompose all the blocks of $\cipher_{\quotient}$.
 This is because if $\plain$ is near the boundaries of flipping (\ie{} $0$, $\lfloor \quotient / 2 - 1 \rfloor$, $\lfloor \quotient / 2 - 1 \rfloor + 1$, and $\quotient - 1$),
 the higher blocks are even closer to the boundaries and quite fragile to the noise by scheme switching.
 Nevertheless, if we have the possible range of the encrypted raw value $\plain$ as domain knowledge,
 we can scale $\plain$ so that it is sufficiently far from the boundaries.
 The range of $\plain$ can be estimated from the range of each $\var \in \Var$ in the monitored signal, $\scale$ in CKKS, and the monitored formula $\fml$.
 With such preprocessing, we apply $\FHomDecomp$ only partly, \eg{} to the higher half of the blocks.

 \cref{algorithm:protocol:ckks_to_tfhe} outlines \texttt{isPositive}.
 In \cref{algorithm:protocol:ckks_to_tfhe:amplify}, we scale $\diffCKKS$ so that the range of $\decCKKS(\diffCKKS)$ almost matches the range\LongVersion{ $[- \lfloor \quotientCKKS / 2\rfloor / \scale, \lfloor \quotientCKKS / 2 - 1\rfloor / \scale]$} of the ciphertext.
 In \cref{algorithm:protocol:ckks_to_tfhe:rescaling}, we switch the modulus from $\quotientCKKS$ to $\quotientTFHE$.
 In \cref{algorithm:protocol:ckks_to_tfhe:sample_extract}, we use an FHE operation called \emph{sample extraction}~\cite{DBLP:journals/joc/ChillottiGGI20}
 to obtain an LWE ciphertext $\diffSwitched$ satisfying $\decTFHE(\diffSwitched) = \top \iff \decCKKS(\diffSwitched[\RLWE]) \geq 0$ if the decryption is successful.
 In \cref{algorithm:protocol:ckks_to_tfhe:key_switching}, we apply homomorphic decomposition to obtain another LWE ciphertext $\signTFHE[]$ with different security parameters, preserving the decryption result if it is successful.
 Lastly, in \cref{algorithm:protocol:ckks_to_tfhe:circuit_bootstrapping}, we use another FHE operation called \emph{circuit bootstrapping}~\cite{DBLP:journals/joc/ChillottiGGI20} to translate $\signTFHE[]$ to an RGSW ciphertext $\signTFHE$, preserving the decryption result if it is successful. %
 Notice that even if we do not have $\diffMax$, we can still use this algorithm by 
 \begin{ienumeration}
 \item skipping \cref{algorithm:protocol:ckks_to_tfhe:amplify} and
 \item applying $\FHomDecomp$ to all the bits, \ie{} $k = \quotient_{\mathrm{tmp}}$ in \cref{algorithm:protocol:ckks_to_tfhe:key_switching}.
 \end{ienumeration}

\end{ShortVersionBlock}

\IEEEVersion{\subsection}\LNCSVersion{\section}{Detail of the FHE parameters}\label{section:parameters}
\begin{table}[tb]
 \centering
 \caption{Summary of CKKS parameters}%
 \label{table:ckks_params}
 \scriptsize
 \begin{tabular}{c c c}
  \toprule
  Parameter& Value we used & Description \\
  \midrule
  $\mathit{poly\_modulus\_degree}$ & 8192 & \begin{tabular}{c}
                                             The degree of\\
                                             polynomial modulus
                                            \end{tabular}\\
  $\mathit{base\_sizes}$  & $60, 40, 40, 60$ & The bit size of each level\\
  $\scale$ & $2^{40}$ & The scale of fresh ciphertexts\\
  \bottomrule
 \end{tabular}
\end{table}
\begin{table}[tb]
 \centering
 \caption{Summary of TFHE parameters}%
 \label{table:tfhe_params}
 \scriptsize
 \begin{tabular}{c c c}
  \toprule
  Parameter& Value we used & Description \\
  \midrule
  $q$ & $2^{32}$ & The modulus for level 0, 0.5 and 1 ciphertexts\\
  $\overline{q}$ & $2^{64}$ & The modulus for level 2 and 3 ciphertexts\\
  $N_0$& $635$ & The length of level 0 ciphertexts\\
  $\alpha_0$ & $2^{-15}$ & \begin{tabular}{c}The standard deviation of the noise\\ for fresh level 0 ciphertexts\end{tabular}\\
  $N_1$ & $2^{10}$ & \begin{tabular}{c}
                    The length of level 1 ciphertexts \\
                    and the dimension of level 1 ciphertexts\\
                   \end{tabular}\\
  $\alpha_1$ & $2^{-25}$ & \begin{tabular}{c}The standard deviation of the noise\\ for fresh level 1 ciphertexts\end{tabular}\\
  $N_2$& $2^{11}$ & \begin{tabular}{c}
                              The length of level 2 ciphertexts \\
                              and the dimension of level 2 ciphertexts\\
                             \end{tabular}\\
  $\alpha_{2}$ & $2^{-47}$ & \begin{tabular}{c}The standard deviation of the noise\\ for fresh level 2 ciphertexts\end{tabular}\\
  $N_{0.5}$& $760$ & The length of level 0.5 ciphertexts\\
  $\alpha_{0}$ & $2^{-17}$ & \begin{tabular}{c}The standard deviation of the noise\\ for fresh level 0.5 ciphertexts\end{tabular}\\
  $N_{3}$& $2^{13}$ & \begin{tabular}{c}
                              The length of level 3 ciphertexts \\
                              and the dimension of level 3 ciphertexts\\
                             \end{tabular}\\
  $\alpha_{3}$ & $2^{-47}$ & \begin{tabular}{c}The standard deviation of the noise\\ for fresh level 3 ciphertexts\end{tabular}\\
  $l$ & 3 & \begin{tabular}{c}Half of the number of rows in\\ level 1 RGSW ciphertexts\end{tabular}  \\
  $\mathit{Bg}_1$ & $2^{6}$ & The base for multiplexers with level 1\\
  $\overline{l}$ & 4 & \begin{tabular}{c}Half of the number of rows in\\ level 2 RGSW ciphertexts\end{tabular}  \\
  $\mathit{Bg}_2$ & $2^{9}$ & The base for multiplexers with level 2\\
  \bottomrule
 \end{tabular}
\end{table}

For the security parameters of CKKS,
we used the default parameters of SEAL~\cite{sealcrypto}, which satisfy the 128-bit security level.
\cref{table:ckks_params} shows the parameters of SEAL that users must provide.

For the security parameters of TFHE,
we also used the default parameters of TFHEpp~\cite{DBLP:conf/uss/MatsuokaBMS021}.
These parameters have been confirmed to satisfy a 128-bit security level by a lattice-estimator.
In addition to the default security parameters of TFHEpp (\ie{} levels 0, 1, and 2),
we also used two other security parameters, levels 0.5 and 3, for scheme switching.
\cref{table:tfhe_params} lists some of the concrete parameters.

\IEEEVersion{\subsection}\LNCSVersion{\section}{Detail of the benchmarks}
\IEEEVersion{\subsubsection}\LNCSVersion{\subsection}{Detail of the \BloodGlucose{} benchmark}

The monitored log in the \BloodGlucose{} benchmark is generated by running a simulator simglucose~\cite{simglucose} and sampling the blood glucose level at a fixed sampling rate.
We used the sampling rate \SI{1}{\minute}, \ie{}
for a signal $\signalWithFiniteInside$, the difference of the time point of $\signal_i$ and $\signal_{i+1}$ is \SI{1}{\minute}.

\IEEEVersion{\subsubsection}\LNCSVersion{\subsection}{Detail of the \RSS{} benchmark}\label{section:detail_rss_formula}

The monitored log in the \RSS{} benchmark is generated by running a driving scenario in an unpublished 2D driving simulator and sampling the behavior at a fixed sampling rate.
We used the sampling rate \SI{500}{\milli\second}, \ie{}
for a signal $\signalWithFiniteInside$, the difference of the time point of $\signal_i$ and $\signal_{i+1}$ is \SI{500}{\milli\second}.

\cref{table:parameters_rss_formula} shows the parameters used in the STL formula for the RSS rule.

\begin{table}[tbp]
 \centering
 \caption{Parameters in the STL formula for RSS.}\label{table:parameters_rss_formula}
 \begin{tabular}{llr}
  \toprule
  & Description & Value we used\\
  \midrule
  $w$ & width of a lane & \SI{4}{\metre} \\
  $\reactionDelay$ & delay of the reaction & \SI{1}{\second}\\
  $\reactionDelayInStep$ & delay of the reaction / sampling rate & 2 samples\\
  $\ayMaxAcc$ & maximum of allowed acceleration & \SI{2}{\metre\per\square\second}\\
  $\ayMaxBr$ & maximum of allowed brake & \SI{9}{\metre\per\square\second}\\
  $\ayMinBr$ & \begin{tabular}{c}
                minimum brake required \\
                at a critical situation
               \end{tabular} 
  & \SI{7}{\metre\per\square\second}\\
  \bottomrule
 \end{tabular}
\end{table}
\IEEEVersion{\subsection}\LNCSVersion{\section}{Additional experimental results}\label{section:additional_experiment_result}

To observe the memory requirements on the clients, we also measured the memory consumption for encryption and decryption on SBCs.
We measured the maximum resident set size using GNU Time\footnote{GNU Time is available from \url{https://www.gnu.org/software/time/}.}.
We measured it for five times and report the average.

\begin{table}[tbp]
 \centering
 \caption{Mean memory consumption to encrypt RLWE ciphertexts within the CKKS scheme and decrypt LWE ciphertexts within the TFHE scheme on SBCs.}%
 \label{table:enc_dec_mem_compsumption_result}
 \begin{tabular}{c c c c}
  \toprule
  & Enc.\ w/ public key & Enc.\ w/ private key & Dec. \\
  \midrule
  NanoPi R6S (\emph{w/} AES accelerator) & \SI{360656}{\kilo\byte} & \SI{298951.2}{\kilo\byte} & \SI{6876.8}{\kilo\byte} \\
  Raspberry Pi 4 (\emph{w/o} AES accelerator) & \SI{360704}{\kilo\byte} & \SI{299089.6}{\kilo\byte} & \SI{7168}{\kilo\byte} \\
  \bottomrule
 \end{tabular}
\end{table}

\cref{table:enc_dec_mem_compsumption_result} shows the mean memory consumption.
The encryption requires significantly more memory than the decryption.
This is likely because of the difference in the size of the keys between CKKS and TFHE.
Nevertheless, the memory consumption is less than \SI{400}{\mega\byte}.
This is likely small enough for most of the recent IoT devices, including the SBCs we used.

\IEEEVersion{\subsection}\LNCSVersion{\section}{Detailed experimental results}\label{section:detail_experiment_result}
\begin{table*}[tbp]
 \rotatebox{90}{
 \begin{minipage}{\textheight}
 \centering
 \caption{Breakout of the execution time of \ourTool{}. The cells with \TIMEOUT{} are the configurations such that the DFA construction aborted due to out of memory.}\label{table:result_detail}
 \scriptsize
  \begin{tabular}{lllllllllllllllll}
   \toprule
   & \multicolumn{4}{c}{DFA eval. (sec.)} & \multicolumn{4}{c}{CKKS to TFHE (sec.)} & \multicolumn{4}{c}{CKKS eval. (sec.)} & \multicolumn{4}{c}{Runtime (sec.)} \\
   & \multicolumn{2}{c}{\Optimized{}} & \multicolumn{2}{c}{\Naive{}} & \multicolumn{2}{c}{\Optimized{}} & \multicolumn{2}{c}{\Naive{}} & \multicolumn{2}{c}{\Optimized{}} & \multicolumn{2}{c}{\Naive{}} & \multicolumn{2}{c}{\Optimized{}} & \multicolumn{2}{c}{\Naive{}} \\
   & \BLOCK{} & \REVERSE{} & \BLOCK{} & \REVERSE{} & \BLOCK{} & \REVERSE{} & \BLOCK{} & \REVERSE{} & \BLOCK{} & \REVERSE{} & \BLOCK{} & \REVERSE{} & \BLOCK{} & \REVERSE{} & \BLOCK{} & \REVERSE{} \\
   \midrule
   $\BloodGlucose_{1}$ & 6.09e+01 & 2.99e+02 & 6.10e+01 & 3.02e+02 & 1.72e+02 & 1.76e+02 & 2.99e+02 & 3.12e+02 & 5.77e-02 & 5.55e-02 & 5.51e-02 & 5.32e-02 & 2.33e+02 & 4.75e+02 & 3.60e+02 & 6.13e+02 \\
   $\BloodGlucose_{2}$ & 6.02e+01 & 3.01e+02 & 6.48e+01 & 3.00e+02 & 1.69e+02 & 1.76e+02 & 3.08e+02 & 3.18e+02 & 7.29e-02 & 7.01e-02 & 7.30e-02 & 6.99e-02 & 2.29e+02 & 4.76e+02 & 3.73e+02 & 6.18e+02 \\
   $\BloodGlucose_{4}$ & 1.23e+01 & \TIMEOUT{} & 1.33e+01 & \TIMEOUT{} & 1.70e+02 & \TIMEOUT{} & 3.15e+02 & \TIMEOUT{} & 8.59e-02 & \TIMEOUT{} & 8.67e-02 & \TIMEOUT{} & 1.83e+02 & \TIMEOUT{} & 3.28e+02 & \TIMEOUT{} \\
   $\BloodGlucose_{5}$ & 1.57e+02 & \TIMEOUT{} & 1.56e+02 & \TIMEOUT{} & 1.73e+02 & \TIMEOUT{} & 3.02e+02 & \TIMEOUT{} & 5.67e-02 & \TIMEOUT{} & 5.34e-02 & \TIMEOUT{} & 3.30e+02 & \TIMEOUT{} & 4.59e+02 & \TIMEOUT{} \\
   $\BloodGlucose_{6}$ & 1.87e+02 & \TIMEOUT{} & 1.59e+02 & \TIMEOUT{} & 1.87e+02 & \TIMEOUT{} & 3.09e+02 & \TIMEOUT{} & 8.22e-02 & \TIMEOUT{} & 7.13e-02 & \TIMEOUT{} & 3.74e+02 & \TIMEOUT{} & 4.68e+02 & \TIMEOUT{} \\
   \midrule{}
   $\BloodGlucose_{7}$ & 9.79e+02 & 1.18e+01 & 9.56e+02 & 1.13e+01 & 2.98e+03 & 2.91e+03 & 4.94e+03 & 5.04e+03 & 1.65e+00 & 1.95e+00 & 1.62e+00 & 1.91e+00 & 3.96e+03 & 2.93e+03 & 5.90e+03 & 5.06e+03 \\
   $\BloodGlucose_{8}$ & 9.63e+02 & 1.24e+01 & 9.67e+02 & 1.28e+01 & 2.91e+03 & 3.01e+03 & 5.08e+03 & 5.07e+03 & 2.49e+00 & 3.17e+00 & 2.60e+00 & 3.10e+00 & 3.87e+03 & 3.03e+03 & 6.05e+03 & 5.09e+03 \\
   $\BloodGlucose_{10}$ & 1.15e+03 & 1.23e+01 & 1.14e+03 & 1.16e+01 & 2.51e+03 & 2.48e+03 & 4.29e+03 & 4.25e+03 & 8.26e-01 & 8.87e-01 & 8.11e-01 & 8.39e-01 & 3.66e+03 & 2.50e+03 & 5.43e+03 & 4.26e+03 \\
   $\BloodGlucose_{11}$ & 1.09e+03 & 1.25e+01 & 1.14e+03 & 1.13e+01 & 2.40e+03 & 2.63e+03 & 4.26e+03 & 4.22e+03 & 7.79e-01 & 9.81e-01 & 7.84e-01 & 8.49e-01 & 3.49e+03 & 2.65e+03 & 5.40e+03 & 4.23e+03 \\
   \midrule{}
   \RSS{} & 4.89e+00 & 5.11e-01 & 5.10e+00 & 4.94e-01 & 2.25e+01 & 2.41e+01 & 4.07e+01 & 3.77e+01 & 4.61e-01 & 4.77e-01 & 4.79e-01 & 4.60e-01 & 2.79e+01 & 2.50e+01 & 4.62e+01 & 3.87e+01 \\
   \bottomrule
  \end{tabular}
 \end{minipage}}
\end{table*}

\cref{table:result_detail} shows the breakout of the execution time of \ourTool{}.
This shows that scheme switching is indeed the major bottleneck among the entire process.

\ifdefined\VersionWithComments%
\IEEEVersion{\subsection}\LNCSVersion{\section}{Additional information for us, the authors}
\instructions{(RV'24) Regular papers (up to 16 pages, not including references) should present original unpublished results. We welcome theoretical papers, system design papers, papers describing domain-specific variants of RV, and real-world case studies of runtime verification. The page limitations mentioned above include all text and figures, but exclude references. Authors can include a clearly marked appendix exceeding the 16 pages, but this will be reviewed at the discretion of reviewers, and will not be included in the proceedings. This appendix must follow the References and be no longer than 4 pages.}
\begin{lncs}
 \setcounter{tocdepth}{1}
\end{lncs}
\listoftodos{}
\fi
\end{LongVersionBlock}
\end{document}